\newcommand{\reducefastest}[1]{\texttt{Reduce-Fastest\ensuremath{(#1)}}\xspace}
\newcommand{\reducemax}{\texttt{Reduce-Max}\xspace}
\newcommand{\merge}{merge\xspace}
\title{Cutting Bamboo Down to Size}
\author{Davide~Bilò}
{Department of Humanities and Social Sciences, University of Sassari, Italy.}
{davide.bilo@uniss.it}
{https://orcid.org/0000-0003-3169-4300}{This work was partially supported by the Research Grant FBS2016\_BILO, funded by "Fondazione di Sardegna" in 2016.}
\author{Luciano~Gualà}
{Department of Enterprise Engineering, University of Rome ``Tor Vergata'', Italy.}
{guala@mat.uniroma2.it}
{https://orcid.org/0000-0001-6976-5579}{}
\author{Stefano~Leucci}
{Department of Information Engineering, Computer Science and Mathematics, University of L'Aquila, Italy.}
{stefano.leucci@univaq.it}
{https://orcid.org/0000-0002-8848-7006}{}
\author{Guido~Proietti}
{Department of Information Engineering, Computer Science and Mathematics, University of L'Aquila, Italy. \\ Institute for System Analysis and Computer Science ``Antonio Ruberti'' (IASI~CNR), Italy.}
{guido.proietti@univaq.it}
{https://orcid.org/0000-0003-1009-5552}{}
\author{Giacomo~Scornavacca}{Department of Humanities and Social Sciences, University of Sassari, Italy.}{giacomo.scornavacca@graduate.univaq.it}{https://orcid.org/0000-0001-5921-0692}{This work was partially supported by Research Grant FBS2016\_BILO, funded by "Fondazione di Sardegna" in 2016.}
\authorrunning{Bilò et al.} 
\keywords{bamboo garden trimming; trimming oracles; approximation algorithms; pinwheel scheduling}
\begin{document}

\maketitle

\begin{abstract}
This paper studies the problem of programming a robotic panda gardener to keep a bamboo garden from obstructing the view of the lake by your house.

The garden consists of $n$ bamboo stalks with known daily growth rates and the gardener can cut at most one bamboo per day. As a computer scientist, you found out that this problem has already been formalized in [Gąsieniec et al., SOFSEM'17]  as the \emph{Bamboo Garden Trimming (BGT) problem}, where the goal is that of computing a perpetual schedule (i.e., the sequence of bamboos to cut) for the robotic gardener to follow in order to minimize the \emph{makespan}, i.e., the maximum height ever reached by a bamboo.

Two natural strategies are \reducemax and \reducefastest{x}. \reducemax trims the tallest bamboo of the day, while \reducefastest{x} trims the fastest growing bamboo among the ones that are taller than $x$. 
It is known that \reducemax and \reducefastest{x}  achieve a makespan of $O(\log n)$ and $4$ for the best choice of $x=2$, respectively. We prove the first constant upper bound of $9$ for \reducemax and improve the one for \reducefastest{x} to $\frac{3+\sqrt{5}}{2} < 2.62$ for $x=1+\frac{1}{\sqrt{5}}$.

Another critical aspect stems from the fact that your robotic gardener has a limited amount of processing power and memory. It is then important for the algorithm to be able to \emph{quickly} determine the next bamboo to cut while requiring at most linear space.
We formalize this aspect as the problem of designing a \emph{Trimming Oracle} data structure, and we provide three efficient Trimming Oracles implementing different perpetual schedules, including those produced by \reducemax and \reducefastest{$x$}.
\end{abstract}

\pagebreak

\section{Introduction}

You just bought a house by a lake. A bamboo garden grows outside the house and obstructs the beautiful view of the lake. 
To solve the problem, you also bought a robotic panda gardener which, once per day, can instantaneously trim a single bamboo. You have already measured the growth rate of every bamboo in the garden, and you are now faced with programming the gardener with a suitable schedule of bamboos to trim in order to keep the view as clear as possible. 

This problem is known as the {\em Bamboo Garden Trimming (BGT) Problem} \cite{DBLP:conf/sofsem/GasieniecKLLMR17} and can be formalized as follows: the garden contains $n$ bamboos $b_1,\ldots,b_n$, where bamboo $b_i$ has a known daily growth rate of $h_i > 0$, with $h_1 \geq \ldots \geq h_n$ and $\sum_{i=1}^{n} h_i = 1$. 
Initially, the height of each bamboo is $0$, and at the end of each day, the robotic gardener can trim at most one bamboo to instantaneously reset its height to zero. The height of bamboo $b_i$ at the end of day $d \geq 1$ and before the gardener decides which bamboo to trim is equal to $(d-d')h_i$, where $d' < d$ is the last day preceding $d$ in which $b_i$ was trimmed (if $b_i$ was never trimmed before day $d$, then $d'=0$). See Figure~\ref{fig:example} for an example.

The main task in BGT is to design a perpetual trimming schedule that keeps the tallest bamboo ever seen in the garden as short as possible. In the literature of scheduling problems, this maximum height is called \emph{makespan}. 

A simple observation shows that the makespan must be at least 1 for every instance. Indeed, for any $\epsilon>0$, a makespan of $1-\epsilon$ would imply that the daily amount of bamboo cut from the garden is at most $1-\epsilon$, while the overall daily growth rate of the garden is $1$. This is a contradiction. Furthermore, there are instances for which the makespan can be made arbitrarily close to 2. Consider, for example, two bamboos $b_1, b_2$ with daily growth rates $h_1=1-\epsilon$ and $h_2=\epsilon$, respectively. Clearly, when bamboo $b_2$ must be cut, the height of $b_1$ becomes at least $2-2\epsilon$. This implies that the best makespan one can hope for is 2. 

Two natural strategies are known for the BGT problem, namely \reducemax and \reducefastest{x}. The former consists of trimming the tallest bamboo at the end of every day, while the latter cuts the bamboo with fastest growth rate among those having a height of at least $x$. Experimental results show that \reducemax performs very well in practice as it seems to guarantee a makespan of 2~\cite{DBLP:conf/IEEEcit/AlshamraniKG15, DBLP:journals/algorithms/DEmidioSN19}. However, the best known upper bound to the makespan is $1+\mathcal{H}_{n-1} = \Theta(\log n)$, where $\mathcal{H}_{n-1}$ is the $(n-1)$-th harmonic number~\cite{DBLP:conf/ifipTCS/BodlaenderHKSWZ12}. Interestingly, this $\Theta(\log n)$ bound also holds for the adversarial setting in which at every day an adversary decides how to distribute the unit of growth among all the bamboos. In this adversarial case such upper bound can be shown to be tight, while understanding whether \reducemax achieves a constant makespan in the non-adversarial setting is a major open problem \cite{DBLP:conf/sofsem/GasieniecKLLMR17,DBLP:journals/algorithms/DEmidioSN19}.
On the other hand, in~\cite{DBLP:conf/sofsem/GasieniecKLLMR17} it is shown that \reducefastest{x} guarantees a makespan of $4$ for $x=2$. Furthermore, it is also conjectured that \reducefastest{1} guarantees a makespan of $2$~\cite{DBLP:journals/algorithms/DEmidioSN19}.

In~\cite{DBLP:conf/sofsem/GasieniecKLLMR17}, the authors also provide a different algorithm guaranteeing a makespan of 2. This is obtained by transforming the BGT problem instance into an instance of a related scheduling problem called \emph{Pinwheel Scheduling},
by suitably rounding the growth rates of the bamboos. 
Then, a perpetual schedule for the Pinwheel Scheduling instance is computed using existing algorithms~\cite{DBLP:journals/algorithmica/ChanC93, 48075}. It turns out that this approach has a problematic aspect since it is known that \emph{any} perpetual schedule for the Pinwheel Scheduling instance can have length $\Omega\left(\prod_{i=1}^{n} \frac{1}{h_i}\right)$ in the worst case.

\begin{figure}
    \centering
    \includegraphics{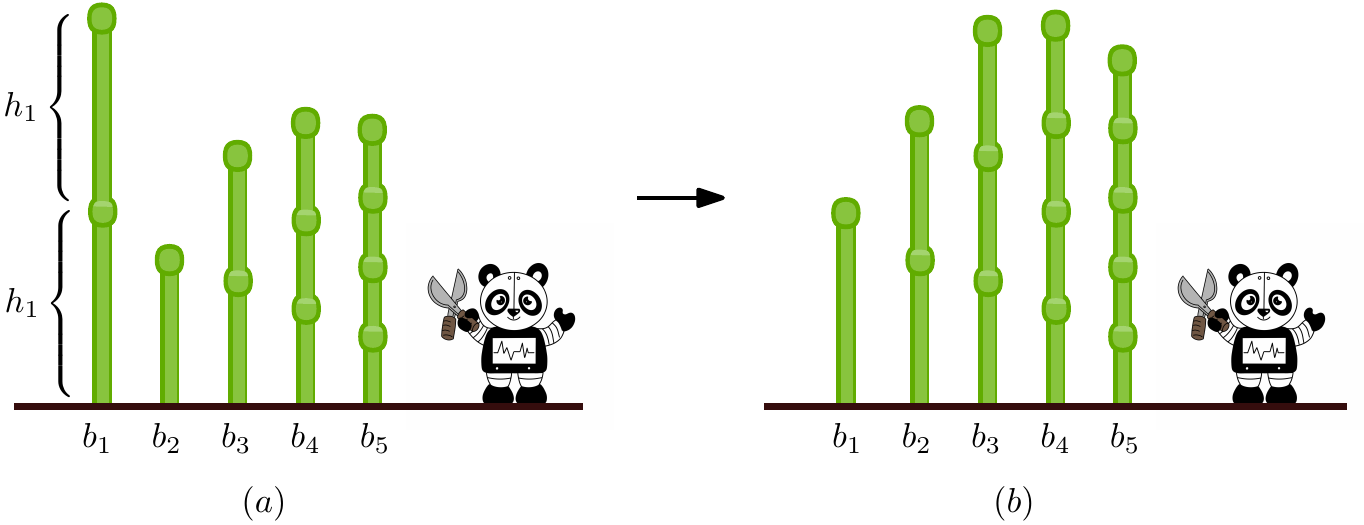}
    \caption{(a) The bamboo garden at the end of a day, just before the robotic gardener trims bamboo $b_1$. 
    (b) The bamboo garden at the end of the next day, before cutting a bamboo.}
    \label{fig:example} 
\end{figure}

The above observation gives rise to the following complexity issue: Can a perpetual schedule be efficiently implemented in general? Essentially, a solution consists of designing a \emph{trimming oracle}, namely a \emph{compact} data structure that is able to \emph{quickly} answer to the query ``What is the next bamboo to trim?''.

It is worth noticing that similar problems are discussed in \cite{DBLP:conf/sofsem/GasieniecKLLMR17}, where the authors ask for the design of trimming oracles that implement known BGT algorithms. For example, they explicitly leave open the problem of designing an oracle implementing \reducemax with query time of $o(n)$. 
\subparagraph{Our results.} Our contribution is twofold. In Section~\ref{sec:makespan}, we provide the following improved analyses of \reducemax and \reducefastest{x}:
\begin{itemize}
    \item We show that \reducemax achieves a makespan of at most $9$. This is the first constant upper bound for this strategy and shows a separation between the static and the adversarial setting for which the makespan is known to be $\Theta(\log n)$. 
    \item We show that, for any $x>1$, \reducefastest{x} guarantees a makespan of at most $\max\left\{x + \frac{x^2}{4(x-1)}, \frac{1}{2}  + x + \frac{x^2}{4(x-\frac{1}{2})}\right\}$. For the best choice of $x=1+\frac{1}{\sqrt{5}}$, this results in a makespan of $1 + \phi = \frac{3+ \sqrt{5}}{2} < 2.62$, where $\phi$ is the golden ratio. Notice also that for $x=2$ (the best choice of $x$ according to the analysis of \cite{DBLP:conf/sofsem/GasieniecKLLMR17}) we obtain an upper bound of $19/6$ which improves over the previously known upper bound of $4$. 
\end{itemize}

Then, in Section~\ref{sec:oracles}, we provide the following trimming oracles:
\begin{itemize}
    \item A trimming oracle implementing \reducemax 
    whose query time is $O(\log^2 n)$ in the worst-case or $O(\log n)$ amortized. The size of the oracle is $O(n)$ while the time needed to build it is $O(n \log n)$. This answers the open problem given in \cite{DBLP:conf/sofsem/GasieniecKLLMR17}.
    
    \item A trimming oracle implementing \reducefastest{x} with $O(\log n)$ worst-case query time. This oracle has linear size and can be built in $O(n \log n)$ time.
    
    \item A trimming oracle guaranteeing a makespan of 2. This oracle uses the rounding strategy from \cite{DBLP:conf/sofsem/GasieniecKLLMR17} 
    but it uses a different approach to compute a perpetual schedule. Our oracle answers queries in $O(\log n)$ amortized time, requires $O(n)$ space, and can be built in $O(n \log n )$ time.
    
    This result favorably compares with the existing oracles achieving makespan $2$ implicitly obtained  when the reduction of \cite{DBLP:conf/sofsem/GasieniecKLLMR17} is combined with the results in \cite{48075,DBLP:journals/algorithmica/ChanC93} for the Pinwheel Scheduling problem.
    Indeed, once the instance $G$ of BGT has been transformed into an instance $P$ of Pinwheel Scheduling, any oracle implementing a feasible schedule for $P$ is an oracle for $G$ with makespan $2$. In \cite{48075}, the authors show how to compute a schedule for $P$ of length $L=\Omega(\prod_{i=1}^{n} \frac{1}{h_i})$, which results in an oracle with exponential building time and constant query time.
    In \cite{DBLP:journals/algorithmica/ChanC93}, an oracle having query time of $O(1)$ is claimed, but attaining such a complexity requires the use of $\Theta(n)$ parallel processors and the ability to perform arithmetic operations modulo $L$ (whose binary representation may need $\Omega(n)$ bits) in constant time.
\end{itemize}

An interactive implementation of our Trimming Oracles described above is available at \url{https://www.isnphard.com/g/bamboo-garden-trimming/}.

\subparagraph{Other related work.} The BGT problem has been introduced in \cite{DBLP:conf/sofsem/GasieniecKLLMR17}. Besides the aforementioned results, this paper also provides an algorithm achieving a makespan better than 2 for a subclass of instances with \emph{balanced} growth rates; informally, an instance is said to be balanced if at least a constant fraction of the overall daily growth is due to bamboos $b_2, \dots, b_n$. The authors also introduce a generalization of the problem, named \emph{Continuous BGT}, where
each bamboo $b_i$ grows continuously at a rate of $h_i$ per unit of time and is located  in a point of a metric space. The gardener can instantaneously cut a bamboo that lies in its same location, but needs to move from one bamboo to the next at a constant speed.
Notice that this is a generalization of BGT problem since one can consider the trivial metric in which all distances are $1$ (and it is never convenient for the gardener to remain in the same location).

Another generalization of the BGT problem called \emph{cup game} can be equivalently formulated as follows: each day the gardener can reduce the height of a bamboo by up to $1+\epsilon$ units, for some constant parameter $\epsilon \ge 0$. If the growth rates can change each day and an adversary distributes the daily unit of growth among the bamboos, then a (tight) makespan of $O(\log n)$ can still be achieved. If the gardener's algorithm is randomized and the adversary is \emph{oblivious}, i.e., it is aware of gardener's algorithm but does not know the random bits or the previously trimmed bamboos, then the makespan is $O(m)$ with probability at least $1 - O(2^{-2^m})$, i.e., it is $O(\log \log n)$ with high probability \cite{KuszmaulSTOC2019}. The generalization of the cup game with multiple gardeners has been also addressed in \cite{KuszmaulSTOC2019, KuszmaulSODA2020a}. 

As we already mentioned, a problem closely related to BGT is the Pinwheel Scheduling problem that received a lot of attention in the literature \cite{DBLP:journals/tc/ChanC92,DBLP:journals/algorithmica/ChanC93,48075,DBLP:journals/tcs/HolteRTV92,DBLP:journals/algorithmica/LinL97,DBLP:journals/algorithmica/RomerR97}.

The BGT problem and its generalizations also appeared in a variety of other applications, ranging
from deamortization, to buffer management in network switches, to
quality of service in real-time scheduling (see, e.g., \cite{KuszmaulSODA2020b,GoldwasserSurvey,DBLP:conf/spaa/AdlerBFGGP03} and the references therein).

\section{New bounds on the makespan of known BGT algorithms}
\label{sec:makespan}

In this section we provide an improved analysis on the makespan guaranteed by the \reducefastest{x} strategy and the first analysis that upper bounds the makespan of \reducemax by a constant. In the rest of this section, we say that a bamboo $b_i$ is trimmed at day $d$ to specify that the schedule computed using the heuristic chooses $b_i$ as the bamboo that has to be trimmed at the end of day $d$. 

\subsection{The analysis for \reducemax}

Here we analyze the heuristic \reducemax, that consists in trimming the tallest bamboo at the end of each day (ties are broken arbitrarily). 

\begin{theorem}
    \label{thm:reduce_max}
	\reducemax guarantees a makespan of $9$.
\end{theorem}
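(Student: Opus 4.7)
\medskip
\noindent\textbf{Proof plan.} The plan is to argue by contradiction. Assume some bamboo eventually reaches a height strictly greater than $9$, and let $d^{*}$ be the first day on which a bamboo, call it $b$ with growth rate $h$, reaches a height $H>9$. By minimality of $d^{*}$, every bamboo has height at most $9$ on every day strictly before $d^{*}$.

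First I would focus on the window $W$ consisting of the last $T=\lfloor H/h\rfloor$ days up to and including $d^{*}$. Since $b$ grew to height $H$ over $T$ days without being trimmed, on each day $d^{*}-t$ (for $t=0,1,\dots,T-1$) Reduce-Max must have trimmed some other bamboo, whose height was at least the height of $b$ that day, namely at least $H-t\,h$. Summing this telescoping lower bound yields that the total height removed by Reduce-Max during $W$ is at least
\[
\sum_{t=0}^{T-1}(H - t\,h) \;\ge\; \frac{H^{2}}{2h} - O(H).
\]
On the other hand, conservation of mass says the total height removed during $W$ equals the total growth during $W$, which is $T=H/h$, plus the net decrease of the garden's total height across the window. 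This forces the garden's total height at the start of $W$ to be at least $\Theta(H^{2}/h)$, giving a quantitative witness that many tall bamboos must have been present.

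Next I would turn this into usable structural information by splitting the bamboos into the \emph{fast} class ($h_{j}\ge h$) and the \emph{slow} class ($h_{j}<h$). Using the inductive hypothesis that no bamboo ever exceeded height $9$ before $d^{*}$, the contribution of the fast class to the total height at the start of $W$ is bounded by $9$ times the number of fast bamboos, and the total growth rate of the fast class is bounded by $1$, so their count is at most $1/h$. A slow bamboo $b_{j}$ with height at least $H-t\,h$ at day $d^{*}-t$ must not have been trimmed for at least $(H-t\,h)/h_{j}$ days, which allows one to bound how many slow-class trims can appear in $W$ at each height level. Aggregating these two bounds yields an upper bound on the total height removed in $W$ that must match the lower bound from the first step.

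The hard part will be balancing the two-class accounting tightly enough to obtain a genuine contradiction precisely when $H>9$. The constant $9$ should emerge from optimizing the split between the contribution of fast bamboos (bounded via $\sum h_{j}\le 1$ and the inductive height cap) and slow bamboos (bounded via how slowly they can re-accumulate height after being trimmed). I expect the bulk of the technical effort to lie in choosing the sub-windows and weight factors in this charging so that the resulting inequality is violated as soon as $H>9$; once this is done, the contradiction closes the induction and establishes the claimed makespan bound.
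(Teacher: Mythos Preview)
Your framework---look at the window $W$ of the last $T\approx H/h$ days, lower-bound the total height removed by $\sum_{t}(H-th)\approx H^{2}/(2h)$, and try to upper-bound it to force a contradiction---is reasonable, but the plan as written cannot close. If you use only the minimality hypothesis ``every height before $d^{*}$ is at most $9$'', then each of the $T-1$ trims preceding day $d^{*}$ removes at most $9$, and your inequality becomes
\[
\sum_{t=1}^{T-1}(H-th)\;\le\;9(T-1),
\]
which simplifies to $H\le 9+hT/2\approx 9+H/2$, i.e.\ $H\le 18$. That is perfectly consistent with $H\in(9,9+h]$, so no contradiction arises. More abstractly, your scheme proves ``if all heights were $\le M$ so far, then the next height is $\le 2M$''---an induction that never closes for any positive $M$. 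The fast/slow split you sketch does not repair this: the total garden height at the start of $W$ (your $\Theta(H^{2}/h)$ witness) is \emph{not} bounded by $O(1/h)$, because there can be arbitrarily many slow bamboos each of height close to $9$; and your observation about how long a slow bamboo must have grown does not by itself cap either the number of slow trims in $W$ or their aggregate contribution. You acknowledge that ``the hard part'' is making this balance come out to $9$, but the proposal contains no mechanism that could do so.

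The paper's argument avoids the circularity by \emph{not} assuming a global height cap. It stratifies bamboos into geometric levels $L_{j}=\{b_{i}:2^{-j}\le h_{i}<2^{-(j-1)}\}$ and proves a one-step recursion $\sigma(j)\le\max\{3,\sigma(j+1)\}+3\sum_{k\le j}|L_{k}|/2^{j}$, where $\sigma(j)$ is the largest height ever reached at level $\ge j$. The window is also chosen differently: it ends at the record day $d_{1}$ and starts at the \emph{last} earlier day $d_{0}$ on which the trimmed bamboo was either of level $>j$ or of height $<3$. Inside $(d_{0},d_{1}]$ every trim is of a level-$\le j$ bamboo of height $\ge 3$, so each \emph{repeated} trim removes at least $3$ units of growth accrued entirely within the window; hence at most one third of the trims are repeated, giving $d_{1}-d_{0}\le\tfrac{3}{2}\sum_{k\le j}|L_{k}|$ with no appeal to an assumed height bound. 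Summing the recursion over levels and using $\sum_{i}h_{i}=1$ then yields $9$. The two ingredients missing from your plan are precisely this level-wise recursion (which replaces the circular global hypothesis) and the first-cut/repeated-cut counting against a fixed threshold, which is what makes the window length---and hence the height increment---controllable.
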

\begin{proof}

We partition the bamboos into groups, that we call levels, according to their daily growth rates. More precisely, we say that bamboo $b_i$ is of \emph{level} $j \ge 1$ if $\frac{1}{2^j} \le h_i < \frac{1}{2^{j - 1}}$. Let $K$ be the level of bamboo $b_n$ and, for every $j$, with $1 \le j \le K$, let $L_j$ denote the set of all the bamboos of level $j$.

For every $1 \le j \le K$, let $\sigma(j)$ be the maximum height ever reached by any bamboo of level $k \ge j$, with $ \sigma(K+1) = 0$ by definition. In order to bound the makespan, it suffices to bound $\sigma(1)$. Rather than doing this directly, we will instead show that for $1 \le j \le K$, we have
\begin{equation}
\sigma(j) \le \max\Big\{3, \sigma(j + 1)\Big\} + 3 \sum_{k = 1}^j \frac{|L_k|}{2^j} .
\label{eq:level_comparison}
\end{equation}
Let $q \leq K$ be the level with lowest index such that $\sigma(q) \leq 3$ (if there is no such index, $q=K$). For any $j < q$ it holds $\max\Big\{3, \sigma(j + 1)\Big\} = \sigma(j + 1)$. As a consequence, the makespan is at most
\begin{equation}
\sigma(1) \le 3 + \sum_{j = 1}^q 3  \sum_{k = 1}^j \frac{|L_k|}{2^j}  \le 3 + 3 \sum_{j = 1}^{K} \sum_{k = 1}^j \frac{|L_k|}{2^j}.
\label{eq:span_bound}
\end{equation}
If bamboo $b_i$ is of level $s$, then the bamboo stalk contributes 
$\sum_{j = s}^{K} \frac{1}{2^j} <  \frac{2}{2^s} \leq 2 h_i$ to the sum in \eqref{eq:span_bound}. As $\sum_{i=1}^{n} h_i = 1$ by definition, it follows that the makespan is bounded by
\begin{equation*}
\sigma(1) \leq 3 + 3 \sum_{j = 1}^{K} \sum_{k = 1}^j \frac{|L_k|}{2^j} \leq 3 + 6 \sum_{i = 1}^n h_i = 9.
\label{eq:span_bound2}
\end{equation*}

We now complete the proof by proving \eqref{eq:level_comparison}, which compares $\sigma(j)$ and $\sigma(j + 1)$ for all $j$. Suppose that bamboo $b_i$ has level $j$, and that at the end of day $d_1$ bamboo $b_i$ achieves the maximum height ever reached by any bamboo of level $j$. Let $d_0 < d_1$ be the largest-numbered day prior to $d_1$ at the end of which either (a) a bamboo $b_\ell$ with level greater than $j$ was trimmed, or (b) a bamboo $b_\ell$ with height less than $3$ was trimmed. Because the \texttt{Reduce-Max} algorithm always trims the tallest bamboo, the height of $b_i$ at the end of day $d_0$ is at most the height of $b_\ell$ at the end of day $d_0$, right before $b_\ell$ is trimmed. It follows that the height of $b_i$ at the end of day $d_1$, right before $b_i$ is trimmed, is at most $h_i (d_1 - d_0)$ greater than the height of $b_{\ell}$ at the end of day $d_0$, right before $b_\ell$ is trimmed. Since the height of $b_\ell$ at the end of day $d_0$ is at most $\max\{3,\sigma(j + 1)\}$, it follows that
\begin{equation}
\sigma(j) \le \max\{3,\sigma(j + 1)\} + h_i(d_1 - d_0) < \max\{3,\sigma(j + 1)\} + \frac{2}{2^{j}} (d_1 - d_0),
\end{equation}
where in the last inequality we use the fact that $h_i < \frac{1}{2^{j - 1}}$. Now, in order to prove \eqref{eq:level_comparison}, it suffices to show that $d_1 - d_0 \le \frac{3}{2} \sum_{k = 1}^j |L_k|$. By the definition of $d_0$, at any day $t \in [d_0 + 1, d_1]$ a bamboo of height at least $3$ and with level equal or smaller than $j$ is trimmed. 
We call a cut at day $t \in [d_0 + 1, d_1]$ a \emph{repeated cut} if, at day $t$, a bamboo that was already trimmed at any day in $[d_0 + 1, t - 1]$ is trimmed again, and a \emph{first cut} otherwise. Note that each repeated cut trims a bamboo whose growth occurred entirely during days $[d_0 + 1, t - 1]$ and that the total growth of the forest in the interval interval $[d_0 + 1, d_1]$ is $d_1 - d_0$. It means that at most $\frac{1}{3}$ of the cuts at day $t \in [d_0 + 1, d_1]$ can be repeated cuts, since at the end of each of these days a bamboo of height at least $3$ is trimmed. On the other hand, the number of first cuts is bounded by the number of distinct bamboos with levels less or equal to $j$, i.e., by $\sum_{k = 1}^j |L_k|$. It follows that the number of days in the window $[d_0 + 1, d_1]$ satisfies $d_1 - d_0 \le \frac{1}{3} (d_1 - d_0) + \sum_{k = 1}^j |L_k|$, and thus $d_1 - d_0 \leq \frac{3}{2} \sum_{k = 1}^j |L_k|$ as desired. 
\end{proof}

\subsection{The analysis for \reducefastest{x}}

Here we provide an improved analysis of the makespan achieved by the \reducefastest{x} strategy. The heuristic \reducefastest{x} consists in trimming, at the end of each day, the bamboo with the fastest daily growth rate among those that have reached a height of at least $x$ (ties are broken in favour of the bamboo with the smallest index).

\begin{theorem}The makespan of \reducefastest{x}, for a constant $x$ such that $x>1$, is upper bounded by $\max\left\{x + \frac{x^2}{4(x-1)}, \frac{1}{2}  + x + \frac{x^2}{4(x-\frac{1}{2})}\right\}$.
\end{theorem}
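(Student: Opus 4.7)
The plan is as follows. Fix a bamboo $b_i$ that attains the maximum height $H$ at the end of some day $d_1$, and let $d_0 < d_1$ be the most recent day on which $b_i$ was trimmed (setting $d_0 = 0$ if it was never trimmed). Writing $\Delta = d_1 - d_0$, we have $H = h_i\Delta$.

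I would then partition $[d_0+1, d_1-1]$ into a Phase~1 of length $\lceil x/h_i\rceil - 1$, during which $b_i$'s height is strictly less than $x$ and hence $b_i$ is ineligible, and a Phase~2 of length $T = \Delta - \lceil x/h_i\rceil$, during which $b_i$ is eligible but overlooked. On each Phase-2 day some other bamboo $b_j$ with $h_j \geq h_i$ is trimmed while its height is at least $x$; otherwise \reducefastest{$x$} would have selected $b_i$. Writing $F = \{b_j : h_j \geq h_i,\ j\neq i\}$ for the family of competitors and $R = \sum_{b_j\in F} h_j$, I bound $T$ by a counting argument: any two consecutive Phase-2 trims of the same $b_j \in F$ are separated by at least $x/h_j$ days (the time needed for $b_j$ to regrow $x$ units), so the number $c_j$ of Phase-2 trims of $b_j$ satisfies $c_j \leq 1 + Th_j/x$. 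Summing over $j$ gives $T = \sum_{b_j \in F} c_j \leq |F| + TR/x$, i.e.\ $T(x - R) \leq x\,|F|$. Combined with the elementary facts $h_i\,|F|\leq R$, $h_i + R \leq 1$, and $h_i\lceil x/h_i\rceil \in [x, x+h_i)$, the identity $H = h_i\lceil x/h_i\rceil + h_iT$ becomes an upper estimate that depends only on the pair $(h_i, R)$.

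The key difficulty, and the main obstacle I expect, is to sharpen this estimate into the claimed bound. A na\"ive maximization over $(h_i, R)$ in the feasible region only recovers the classical $x + x/(x-1) = x^2/(x-1)$, i.e.\ the value $4$ from \cite{DBLP:conf/sofsem/GasieniecKLLMR17} when $x = 2$. To squeeze out the improved first term $x + x^2/(4(x-1))$, one must perform a quadratic optimization in which the constraint $h_i + R \leq 1$ is used to complete the square; the factor $1/4$ is the signature of $\max_t t(1-t) = 1/4$ after parameterizing $R$ in terms of $h_i$. The second term $\tfrac12 + x + \tfrac{x^2}{4(x-\tfrac12)}$ should arise from a separate case, triggered once $h_i$ exceeds a threshold (plausibly $\tfrac12$): in that regime the discrete overshoot $h_i\lceil x/h_i\rceil - x$ is controlled by the fixed constant $\tfrac12$ instead of by $h_i$, while simultaneously $R \leq 1 - h_i \leq \tfrac12$ shifts the relevant denominator from $x-1$ to $x-\tfrac12$. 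The $\max$ of the two expressions is then an upper bound for $H$ regardless of which regime the optimum lies in, and I expect the cleanest presentation to split the analysis explicitly along this threshold.
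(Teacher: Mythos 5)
Your setup coincides with the paper's (focus on the first day $b_i$ becomes eligible, count days until the makespan is reached while eligible competitors $b_j$ with $h_j\ge h_i$ are trimmed instead), but your counting argument is strictly weaker than the one the paper uses, and your guess about where the $\tfrac14$ comes from is wrong, so the gap you identify is real and your plan for closing it would not succeed.

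The bound $c_j\le 1+Th_j/x$ is a \emph{per-competitor} regrowth bound. Summed it gives $T(x-R)\le x\lvert F\rvert$, and even after optimizing over $(h_i,R)$ under $h_i\lvert F\rvert\le R\le 1-h_i$ this caps out at $x+x/(x-1)=x^2/(x-1)$, exactly as you observe. Nothing in the constraint $h_i+R\le 1$ introduces a $\max_t t(1-t)=\tfrac14$ term: the quantity $R/(x-R)$ is monotone in $R$, so the optimizer sits on the boundary and you just recover the classical bound.

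What the paper does instead is a \emph{volume / charging} argument that exploits the \emph{temporal ordering} of cuts rather than treating competitors independently. Let $N$ be the number of distinct bamboos trimmed during the eligible window $[0,T-1]$; the total growth of the garden over that window is $V=T$. Each of those $N$ bamboos has exactly one \emph{first cut} in the window, and these first cuts occur on $N$ distinct days, so the $k$-th of them (in chronological order) removes at least $k\,h_i$ units of within-window growth; summing gives at least $\tfrac{N(N+1)}{2}h_i$. Symmetrically, the \emph{last} cuts of these $N$ bamboos happen on $N$ distinct days, so the uncut leftover at the end of the window is at least $\tfrac{N(N-1)}{2}h_i$, plus $Th_i$ for $b_i$ itself which is never cut. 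Whatever is neither removed by a first cut nor left over must be removed by \emph{repeated} cuts, and each repeated cut removes at least $x$. Writing $N'$ for the number of repeated cuts, $T=N+N'$ and
\[
N' \;\le\; \frac{T(1-h_i)-N^2 h_i}{x},
\qquad\text{hence}\qquad
T \;\le\; \frac{N(x-Nh_i)}{x-1+h_i}.
\]
The $\tfrac14$ now appears because the right-hand side is a downward parabola in the \emph{discrete count} $N$, maximized at $N=x/(2h_i)$, giving $T\le x^2/\bigl(4h_i(x-1+h_i)\bigr)$ and then
\[
M \;\le\; x+h_i+Th_i \;<\; x+h_i+\frac{x^2}{4(h_i+x-1)}.
\]
The two branches of the stated maximum are then simply the endpoint evaluations of this function of $h_i$ on $[0,\tfrac12]$ (its derivative has a single sign change there, so the max is at $h_i=0$ or $h_i=\tfrac12$), with the regime $h_i>\tfrac12$ dispatched trivially since it forces $i=1$ and $M<x+1$. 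So your intuition that a case split near $h_i=\tfrac12$ is involved is right, but the mechanism is endpoint evaluation of the single bound $M'(h_i)$, not a different overshoot/denominator analysis.

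In short: the missing idea is to replace your additive per-bamboo bound on cut counts with a volume bound that (a) charges first cuts using the fact that they land on distinct days, yielding a quadratic-in-$N$ term, and (b) optimizes the resulting parabola over $N$. Without that ordering trick, the $\tfrac14$ is unreachable.
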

\begin{proof}
Let $M$ be the makespan of \reducefastest{x} and let $b_i$ be one of the bamboos such that the maximum height reached by $b_i$ is exactly $M$. Let $[d_0, d_1]$ be an interval of days such that $b_i$ reaches the makespan in $d_1$ and $d_0$ is the last day in which $b_i$ was trimmed before $d_1$ ($d_0$ may also be equal to 0). Let $\delta$ the first day in $[d_0,d_1]$ such that the height of $b_i$ is at least $x$. For sake of simplicity we rename the interval $[\delta,d_1]$ as $[0,T]$, with $T = d_1-\delta$. Let $N$ be the number of distinct bamboos that are trimmed in $[0,T-1]$. 

We now give some definitions. Let the \emph{volume} $V$ of the garden be the overall growth of the bamboo in the days of the interval $[0, T-1]$. Since the garden grows by $\sum_{i=1}^n h_i = 1$ per day, we have $V=T$. Consider the cut of a bamboo $b_j$ on day $d \in [0, T-1]$. If $b_j$ was cut at least once in $[0, d-1]$ we say that the cut is a \emph{repeated cut} otherwise we will say that the cut is a \emph{first cut}.
The act of cutting bamboo $b_j$ on a day $d \in [0, T-1]$ with a repeated cut \emph{removes} an amount of volume that is equal to $(d-d')h_j$, where $d'$ is the last day of $[0, d-1]$ in which $b_j$ has been cut, if this is a repeated cut, and $d'=0$ if this is a first cut. Finally, the \emph{leftover volume} of a bamboo $b_j$ is the overall growth of $b_j$ that happened during interval $[0, T-1]$ and has not been cut by the end of day $T-1$.

We will now bound the amount $V'$ of volume $V$ that is removed by repeated cuts in the interval $[0,T-1]$.
Notice that, for each bamboo $b_j$ that is cut in the interval $[0,T-1]$, it holds that $h_j \geq h_i$. If $b_j$ is cut for its first time at day $d$ (among the days in $[0,T-1]$), then the removed volume will be at least $(d+1) h_j \ge (d+1) h_i$. Therefore, after all the $N$ bamboos of the interval $[0,T-1]$ have been cut at least once, the amount of volume removed by first cuts will be at least $\sum_{j=i}^N j h_i = \frac{N(N+1)}{2} \cdot  h_i$, since at most one bamboo is cut per day.
Moreover, if $b_j$ is cut for its last time at day $T-1-d$ (among the days in $[0,T-1]$), $b_j$ will have a height of $d h_i$ at the end of day $T-1$. Finally, bamboo $h_i$ is never cut in the interval $[0,T-1]$ and hence during the interval $[0, T-1]$ it grows by exactly $T h_i$.
This means that the overall leftover volume will be at least $\sum_{j=1}^N (j-1) h_i + T h_i = \frac{N(N-1)}{2}\cdot h_i + T h_i$.

We can then write
\[
V' \le V - \left( \frac{N(N+1)}{2} + \frac{N(N-1)}{2} \right)  \cdot  h_i - T h_i = V - N^2 h_i - T h_i = T(1-h_i) - N^2 h_i,
\]
where the last equality follows from $V=T$.

Since in $[0,T-1]$ the bamboo $b_i$ has height at least $x$, each repeated cut removes at least $x$ units of volume from $V'$. Therefore, the number $N'$ of repeated cuts is at most $\frac{V'}{x} \leq \left( T(1-h_i) - N^2 h_i \right)/x$. We now use this upper bound on $N'$ to derive an upper bound to the time $T$:
\[
T = N + N' \leq N + \frac{T(1-h_i) - N^2 h_i}{x}.
\]

For $T'(N)=(Nx  - N^2 h_i) / (h_i +x-1)$, the above formula
implies $T \le T'(N)$. If we fix $h_i$ and $x$, $T'(N)$ is a concave downward parabola that attains its maximum in its vertex at $N = x/2h_i$. Thus:
\[
T \le T'(x/2h_i) \le 
\frac{\frac{x^2}{2 h_i} - \frac{x^2}{4 h_i}}{h_i +x-1} 
= \frac{x^2}{4 h_i(h_i +x-1)}.
\]
Using this upper bound to $T$ we now bound the overall growth of the bamboo $b_i$, i.e., the makespan $M$. At day $d=0$, $b_i$ has height at most $x+h_i$ by our choice of $\delta$, and in the next $T$ days it grows by $T h_i$. Hence:
\begin{equation}\label{eq:makespan}
M \leq x + h_i + T h_i  < x + h_i + \frac{x^2}{4(h_i + x-1)}. 
\end{equation}
Let $M'(h_i)=x + h_i + \frac{x^2}{4(h_i + x-1)}$.
The derivative w.r.t.\ $h_i$ of the above formula is 
\begin{align*}
\frac{\partial M'}{\partial h_i}
&=1 - x^2/4(h_i + x - 1)^2 
= \frac{4(h_i + x - 1)^2  -  x^2}{4(h_i + x - 1)^2 }
= \frac{(x + 2 h_i -2) (3 x + 2 h_i -2  )}{4 (h_i + x - 1)^2}.
\end{align*}
The denominator is always positive, and the numerator is a concave upward parabola having its two roots at $h_i = 1 - 3x/2$ and at $h_i = 1 - x/2$.  Let us briefly restrict ourselves to the case $h_i \le \frac{1}{2}$ and notice that, since $x>1$, the first root is always negative, while the second root is always smaller than $\frac{1}{2}$. It follows that the maximum of $M'(h_i)$ is attained either at $h_i=0$ or at $h_i = \frac{1}{2}$.
Substituting in  Equation~\ref{eq:makespan} we get:
\[
    M \le \max\left\{ x + \frac{x^2}{4(x-1)},  x + \frac{1}{2} + \frac{x^2}{4(x-\frac{1}{2})} \right\}
\]

As far as the case $h_i > \frac{1}{2}$ is concerned, notice that it implies $i=1$ (since if $i \ge 2$ we would have the contradiction $\sum_{i=1}^n h_i > \frac{1}{2} \cdot i= 1$) and hence bamboo $b_1$ is trimmed as soon as its height reaches at least $x$. The makespan $M$ must then be less than $x + h_1 < x + 1$, which is always smaller than $x+\frac{1}{2} + \frac{x^2}{4(x-\frac{1}{2})} > x + \frac{1}{2} + \frac{1}{2}$. 
\end{proof}
\begin{corollary}\label{cor:makespan}
The makespan of \reducefastest{2} is at most $19/6$ and the makespan of \reducefastest{1+\frac{1}{\sqrt{5}}} is at most $1+\phi < 2.62$, where $\phi$ is the golden ratio.
\end{corollary}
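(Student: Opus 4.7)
The plan is direct: substitute the two specified values of $x$ into the upper bound $\max\{x + \frac{x^2}{4(x-1)}, \frac{1}{2} + x + \frac{x^2}{4(x-\frac{1}{2})}\}$ from the preceding theorem and simplify. There is no further combinatorial argument to make; the work is purely algebraic verification together with an optimization calculation that motivates the choice $x = 1 + \frac{1}{\sqrt{5}}$.

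For $x=2$, I would evaluate the two terms separately: the first gives $2 + \frac{4}{4} = 3$, while the second gives $\frac{1}{2} + 2 + \frac{4}{6} = \frac{19}{6}$. Since $\frac{19}{6} > 3$, the maximum is $\frac{19}{6}$, as claimed.

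For $x = 1 + \frac{1}{\sqrt{5}}$, the natural approach is first to explain how this value is obtained by minimizing $f(x) = x + \frac{x^2}{4(x-1)}$ over $x > 1$. Setting $f'(x) = 0$ leads to $5x^2 - 10x + 4 = 0$, whose unique root greater than $1$ is $x = 1 + \frac{1}{\sqrt{5}}$. Plugging this value into $f(x)$ and simplifying (using $x-1 = \frac{1}{\sqrt{5}}$ and $x^2 = \frac{6}{5} + \frac{2}{\sqrt{5}}$) yields $f(x) = \frac{3 + \sqrt{5}}{2} = 1+\phi$.

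The only remaining obligation is to check that at $x = 1 + \frac{1}{\sqrt{5}}$ the second term $g(x) = \frac{1}{2} + x + \frac{x^2}{4(x - \frac{1}{2})}$ does not exceed $1+\phi$, so that the maximum in the theorem is indeed governed by the first term. A direct numerical evaluation suffices: with $x \approx 1.4472$ one computes $g(x) < 2.62$, which is below $1 + \phi \approx 2.6180\ldots$ only marginally, so I would carry out the comparison in exact arithmetic by rationalising denominators (multiplying numerator and denominator of $\frac{x^2}{4(x-\frac{1}{2})}$ by $\sqrt{5}$) and verifying the inequality symbolically. This verification is the one spot where one has to be careful, since the two expressions are numerically close; everything else is routine substitution.
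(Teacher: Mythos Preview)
Your approach is correct and matches the paper's implicit reasoning (the corollary is stated without proof, as it follows by direct substitution into the theorem). One remark: your concern that the second term is ``only marginally'' below $1+\phi$ is unfounded---at $x = 1 + \tfrac{1}{\sqrt{5}}$ the quantity $\tfrac{x^2}{4(x-\frac{1}{2})}$ simplifies exactly to $1 - \tfrac{1}{\sqrt{5}}$, so $g(x) = \tfrac{1}{2} + x + \bigl(1 - \tfrac{1}{\sqrt{5}}\bigr) = \tfrac{5}{2}$, comfortably below $\tfrac{3+\sqrt{5}}{2} \approx 2.618$.
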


\section{Trimming oracles}
\label{sec:oracles}

This section is devoted to the design of trimming oracles. More precisely, we first design two trimming oracles that implement \reducefastest{x} and \reducemax, respectively. The trimming oracle that implements \reducefastest{x} has a $O(\log n)$ worst-case query time, uses linear size and can be built in $O(n \log n)$ time. The trimming oracle that implements \reducemax has a $O(\log^2 n)$ worst-case query time or a $O(\log n)$ amortized query time, uses linear space, and can be built in $O(n \log n)$ time. We conclude this section by designing a novel trimming oracle  that guarantees a makespan of $2$ and has a $O(\log n)$ amortized query time. The oracle uses linear size and can be built in $O(n\log n)$ time. For technical convenience, in this section we index days starting from $0$, so that at the end of day $0$ the gardener can already trim the first bamboo.

An interactive implementation of the Trimming Oracles described in this section is available at \url{https://www.isnphard.com/g/bamboo-garden-trimming/}. 

\subsection{A Trimming Oracle implementing \reducefastest{x}}
\label{sec:oracle_reducefastest}

We now describe our trimming oracle implementing \reducefastest{x}.
The idea is to keep track, for each bamboo $b_i$, of the next day $d_i$ at which $b_i$ will be at least as tall as $x$.
When a query at a generic day $D$ is performed, we will then return the bamboo $b_i$ with minimum index $i$ among the ones for which $d_i \ge D$.

To this aim we will make use of a \emph{priority search tree} \cite{DBLP:journals/siamcomp/McCreight85} data structure $T$ to dynamically maintain a collection $P = \{ (x_1, y_1), (x_2, y_2), \dots \}$ of 2D points with distinct $y$ coordinates in $\{1, \dots, n\}$ under insertions and deletions while supporting the following queries:
\begin{description}
    \item[MinYInXRange($T, x_0$):] report the minimum $y$-coordinate among those of the points $(x_i, y_i)$ for which $x_i \le x_0$, if any. 
    \item[GetX($T, y$):] report the $x$-coordinate $x_i$ of the (at most one) point $(x_i, y_i)$ for which $y_i = y$, if any.
\end{description}

All of the above operations on $T$ require time $O(\log |P|)$,  as long as all coordinates and query parameters fit in $O(1)$ words of memory.\footnote{While this query is not described in \cite{DBLP:journals/siamcomp/McCreight85}, it can be  easily implemented in $O(\log |P|$) time using a dictionary and the fact that $y$-coordinates are distinct.}

In our case, the points $(x_i, y_i)$  will be the pairs $(d_i, i)$ for $i=1,\dots,n$. In such a way, a MinYInXRange query with $x_0=D$ will return exactly the index $i$ of the bamboo $b_i$ to be cut at the end of day $D$, if any. 
After cutting $b_i$, we \emph{update} $T$ to account for the new day at which the height $b_i$ will be at least $x$, i.e., we replace the old point $(d_i, i)$ with $(D + \lceil x / h_i \rceil, i)$.
Unfortunately, since the trimming oracle is ought to be used perpetually, (the representations of) both $d_i$ and $D$ will eventually require more than a constant number of memory words.

We solve this problem by dividing the days into contiguous intervals $I_0, I_1, \dots$ of $n$ days each, where $I_j = [ nj, nj+1, \dots, n(j+1)-1]$, and by using two priority search trees $T_1$ and $T_2$ that are associated with the current and the next interval, respectively. 
This allows us to measure days from the start of the current interval $I_j$, i.e., if $D=nj+\delta \in I_j$, then we only need to keep track of $\delta \in [0, \dots, n-1]$.
In place of $(d_i, i)$, we store the point $(\delta_i, i)$ in $T_1$, where $\delta_i = d_i - nj$. In this way, the previous query with $x_0 = D$ will now correspond to a query with $x_0 = \delta$.

Finally, we also ensure that at the end of the generic day $D=nj+\delta$, $T_2$ contains the point $(\delta'_i, i)$ for each $d_i = n(j+1) + \delta'$ and $i =1, \dots, \delta+1$. This allows us to swap $T_2$ for $T_1$ when interval $I_{j}$ ends.

\begin{figure}
    \centering
    \includegraphics{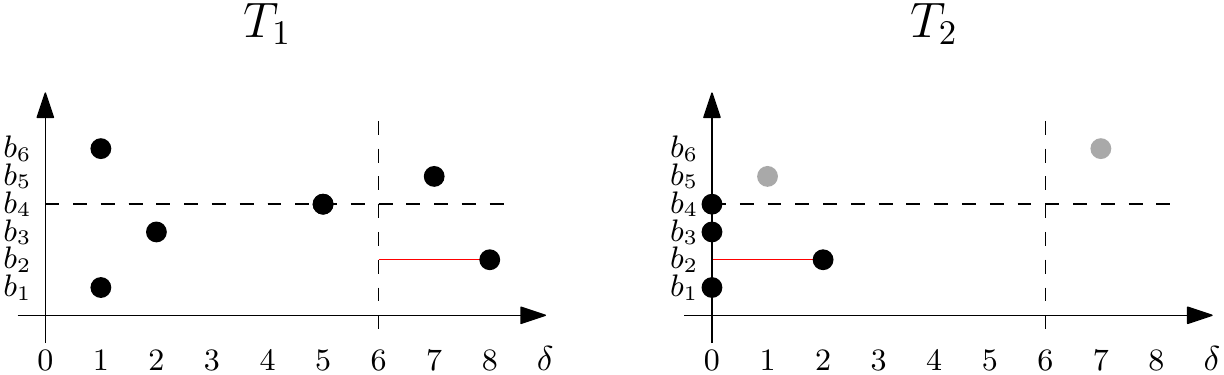}
    \caption{An example of the points contained in the priority search trees $T_1$ and $T_2$ for an instance with $6$ bamboos at the end of day $\delta=4$ of a generic interval $I_j$. We labeled the $y$-coordinate $i$ with $b_i$ since the unique point $(d_i, i)$ having $y$-coordinate $i$ represents the day at which $b_i$ reached/will reach a height of at least $x$. Notice that the points corresponding to bamboos $b_1$, $b_2$, $b_3$, and $b_4$ are already updated in $T_2$, while $b_5$ and $b_6$ (shown in gray) will be updated by the days $\delta=5$ and $\delta=6$, respectively. At the end of day $\delta=6$, all the points in $T_2$ are updated and $T_1$ can be safely swapped with $T_2$.}
    \label{fig:reduce_fastest_oracle}
\end{figure}

Since bamboo $b_i$ reaches height $x$ exactly  $\lceil x/h_i \rceil$ days after being cut, it follows that the largest $x$-coordinate ever stored in $T_1$ or $T_2$ is at most $n + x/h_n$ and we can then support MinYInXRange queries in $O(\log(n))$ time (where we are assuming that $x$, $h_n$ and thus $x/h_n$ fit in a constant number of memory words).

The pseudocode of our trimming oracle is as given in Algorithm~\ref{alg:reduce_fastest_oracle}. The procedure Query() is intended to be run every day. Consider a generic day $\delta$ of the current interval $I_j$. At this time, $T_1$ correctly encodes all the days at which the bamboos reached, or will reach, height at least $x$ when measured from the starting day of the current interval (i.e., from day $nj$), and after all the cuts of the previous days have already been performed.\footnote{Actually, if a bamboo $b_i$ reached height $x$ before the beginning of the considered interval, we will store the point $(0, i)$ in place of $(\delta_i, i)$ with $\delta_i < 0$. This still encodes the fact that it is possible to trim $b_i$ from the very fist day of the interval and prevents $\delta_i$ from becoming arbitrarily small.}
The same information concerning bamboos $b_1, \dots, b_{\delta}$ is replicated in $T_2$ with respect to the starting time of the next interval (i.e., $(n+1)j$). The procedure Query() accomplishes two tasks: (1) it computes the bamboo $b_i$ to cut at the end of day $\delta$ of the current interval (if any) and it updates the data structures $T_1$ and $T_2$ to account for the new height of $b_i$; (2) it updates the information concerning $b_{\delta+1}$ in $T_2$. See Figure~\ref{fig:reduce_fastest_oracle} for an example.

\begin{algorithm}[t]
\small
\caption{Trimming Oracle for \reducefastest{$x$}}
\label{alg:reduce_fastest_oracle}

\SetKwProg{Function}{Function}{:}{}
\SetKwFunction{Query}{Query}
\SetKwFunction{Update}{Update}
\SetKwFunction{Build}{Build}

\SetKwFunction{PriorityQuery}{MinYInXRange}
\SetKwFunction{PriorityInsert}{Insert}
\SetKwFunction{PriorityDelete}{Delete}
\SetKwFunction{PriorityGetX}{GetX}

\Function{\Build{}}
{
    $\delta \gets 0$\;
    $T_1, T_2 \gets $ Pointers to two empty priority search trees\;
    
    \BlankLine
    $h_1, \dots, h_n \gets$ Sort the growth rates of the $n$ bamboo in nonincreasing order\;
    \For{$i=1\,\dots,n$}
    {
        \PriorityInsert{$T_1, (\lceil x/h_i \rceil - 1, i)$}
    }
}

\BlankLine

\Function{\Update{$T, \delta_i, i$}}
{
    $\delta'_i \gets \PriorityGetX(T, i)$\;
    \lIf{$\delta'_i$ exists}{\PriorityDelete{$(\delta'_i, i)$}}
    \PriorityInsert{$T,  (\max\{0, \delta_i\}, i)$}\;
}

\BlankLine

\Function{\Query{}}
{
    \tcp{Cut fastest bamboo $b_i$ that reached height $x$ by day $\delta$}
    $i \gets \PriorityQuery(T_1, \delta)$\;
    \If{$i$ exists}
    {
        \Update{$T_1$, $\delta + \lceil x/h_i \rceil$, $i$} \;
        \Update{$T_2$, $\delta + \lceil x/h_i \rceil -n$, $i$} \;
    }
    
    \BlankLine
    
    \tcp{Make sure that bamboo $b_{\delta+1}$ is updated in $T_2$}
    $\delta_{\delta+1} \gets \PriorityGetX(T_1, \delta+1)$\;
    \Update{$T_2$, $\delta_{\delta+1} - n$, $\delta + 1$}

    \BlankLine
    \tcp{Move to the next day and possibly to the next interval}
    $\delta \gets (\delta + 1) \bmod n$\;
    \lIf{$\delta=0$}{Swap $T_1$ and $T_2$}

    \BlankLine

    \leIf{$i$ exists}{\Return ``Trim bamboo $b_i$''}{\Return ``Do nothing''}
}
\end{algorithm}

The following theorem summarizes the performances of our trimming oracle.
\begin{theorem}
There is a Trimming Oracle implementing \reducefastest{x} that uses $O(n)$ space, can be built in $O(n \log n)$ time, and can report the next bamboo to trim in $O(\log n)$ worst-case time.
\end{theorem}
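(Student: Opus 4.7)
The plan is to verify Algorithm~\ref{alg:reduce_fastest_oracle} by maintaining an inductive invariant on the contents of $T_1$ and $T_2$, and then to read off the three complexity bounds from standard properties of the priority search tree. For every bamboo $b_i$ let $d_i$ denote the first day on which $b_i$ would have height at least $x$ given all trims performed so far. I will prove that at the beginning of every call to \texttt{Query} at day $\delta$ of the current interval $I_j$: (a) for every $i$, $T_1$ contains $\bigl(\max\{0, d_i - nj\}, i\bigr)$; and (b) for every $i \le \delta$, $T_2$ contains $\bigl(\max\{0, d_i - n(j+1)\}, i\bigr)$. The base case follows directly from \texttt{Build}, which inserts $(\lceil x/h_i\rceil-1, i)$ into $T_1$ and leaves $T_2$ empty.

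Assuming the invariant, correctness of a single query is immediate: \texttt{MinYInXRange}$(T_1,\delta)$ returns the smallest $i$ with $d_i-nj\le\delta$, i.e., the smallest-index bamboo whose height is at least $x$ at the end of day $nj+\delta$, which is exactly the choice of \reducefastest{x}. The two \texttt{Update} calls then refresh the trimmed bamboo in both trees at the new value $d_i = nj+\delta+\lceil x/h_i\rceil$, and the final \texttt{GetX}/\texttt{Update} block extends condition (b) to cover $b_{\delta+1}$. After incrementing $\delta$ and, at the boundary, swapping $T_1$ with $T_2$, one checks that (a) and (b) hold again for the next day; the delicate point at the wrap-around is that when $\delta=n-1$ all $n$ bamboos have been synchronized into $T_2$, so after the swap the new $T_1$ correctly describes the interval $I_{j+1}$, while whatever stale entries remain in the new $T_2$ are harmless because each will be overwritten by an \texttt{Update} call before it is ever read.

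For the complexity bounds, each of $T_1$ and $T_2$ stores at most $n$ points, so the total space is $O(n)$. \texttt{Build} sorts the growth rates in $O(n\log n)$ time and then performs $n$ insertions each of cost $O(\log n)$, yielding $O(n\log n)$ preprocessing time. Every invocation of \texttt{Query} executes a constant number of priority search tree operations on collections of size at most $n$, each costing $O(\log n)$ in the worst case, under the standard word-RAM assumption that $x$, $h_n$, and therefore every stored $x$-coordinate (bounded by $n+\lceil x/h_n\rceil$) fit in $O(1)$ words. The main obstacle I anticipate is stating and inductively verifying the invariant cleanly across the interval boundary; once that is settled, the complexity claims are routine.
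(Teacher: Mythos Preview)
Your proposal is correct and follows exactly the construction the paper describes in Section~\ref{sec:oracle_reducefastest}; the paper itself does not spell out a formal proof but merely presents Algorithm~\ref{alg:reduce_fastest_oracle} and states the theorem, so your invariant-based verification is precisely the argument one would write to fill in the details. Your treatment of the interval boundary (stale entries in the new $T_2$ being overwritten before they are queried) and the word-size assumption on the stored $x$-coordinates match the paper's reasoning.
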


\subsection{A Trimming Oracle implementing \reducemax}

The idea is to maintain collection $L$ of $n$ lines $\ell_1, \dots, \ell_n$ in which $\ell_i(d) = h_i d + c_i$ is associated with bamboo $b_i$ and represents its height at the end of day $d$. Initially $c_i = h_i$.

Determining the bamboo $b_i$ to trim at a generic day $d$ then corresponds to finding the index $i$ that maximizes $ \ell_i(d)$. After bamboo $b_i$, previously of height $H$, has been cut, $\ell_i$ needs to be updated to reflect the fact that $b_i$ has height $0$ at time $d$, which corresponds to decreasing $c_i$ by $H$.

\begin{figure}
    \centering
    \includegraphics{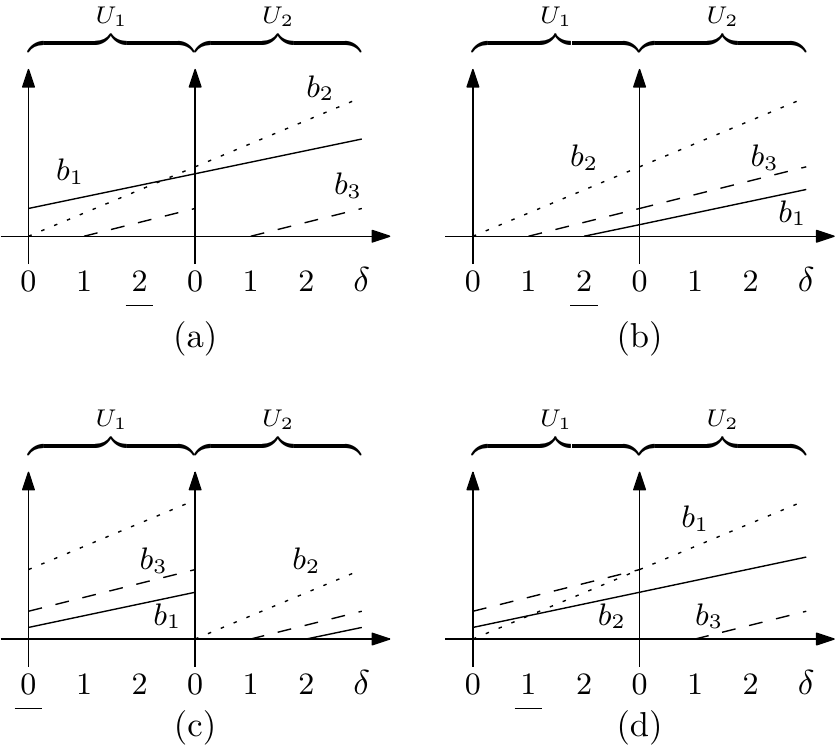}
    \caption{An example of the points contained in $U_1$ and $U_2$ for an instance with $3$ bamboos, at the beginning of day $2$ of a generic interval $I_j$ (a), at the end of day $2$ of $I_j$ but before moving to $I_{j+1}$ (b), at beginning of day $0$ of $I_{j+1}$ (c), and at the beginning of day $1$ of $I_{j+1}$ (d).}
    \label{fig:reduce_max_oracle}
\end{figure}

The \emph{upper envelope} $\mathcal{U}_L$ of $L$ is a function defined as $\mathcal{U}_L(d) = \max_{\ell \in L}{\ell(d)}$. We make use of an \emph{upper envelope data structure} $U$ that is able to maintain $L$ under insertions, deletions and lookups of named lines and supports the following query operation:
\begin{description}
\item[Upper($U, d$)] return a line $\ell  \in L$ for which $\ell(d) = \mathcal{U}_L(d)$.
\end{description}

Unfortunately, the trivial implementation of the trimming oracle suggested by the above description encounters similar problems as the ones discussed in Section~\ref{sec:oracle_reducefastest} for \reducefastest{x}:  the current day $d$ and the coefficients $c_i$ will grow indefinitely, thus affecting the computational complexity.

Once again, we solve this problem by using two copies $U_1$, $U_2$ of the previous \emph{upper envelope data structure} and by dividing the days into intervals $I_1, I_2,\dots$ with $I_j = [ nj, nj+1, \dots, n(j+1)-1]$. 
At the beginning of the current day $D = nj + \delta \in I_j$, $U_1$ will contain all lines $\ell_1, \dots, \ell_n$ and the value of each $\ell_i(\delta)$ will be exactly the height of $b_i$.
Moreover, at the end of day $D$ (i.e., after the highest bamboo of day $D$ has been trimmed), $U_2$ will contain a line $\ell'_{i}$ for each $i \le \delta+1$ such that $\ell'_i(\delta')$ with $\delta' \in [0, n-1]$ is exactly the height reached by $b_i$ on day $n(j+1)+\delta'$ if it is not trimmed on days $nj+\delta+1, \dots, n(j+1)+\delta'-1$. 
This means that at the end of day $nj + (n-1)$, $U_2$ correctly describes the heights of all bamboos in the next interval $I_{j+1}$ as a function of $\delta'$, and we can safely swap $U_1$ with $U_2$. See Figure~\ref{fig:reduce_max_oracle} for an example.

The pseudocode of our trimming oracle is given in Algorithm~\ref{alg:trimming_oracle_reducemax}. A technicality concerns the initial construction of the set of lines in $U_1$.
Notice that this is not handled by the Build() function, but we iteratively add $\ell_1, \dots, \ell_n$ during the first $n$ calls to Query() (i.e., during the days of interval $I_0$).
We can safely do this since the \reducemax strategy ensures that at time $D \in I_0$ only  bamboos in $\{b_1, \dots, b_{D+1}\}$ can conceivably be trimmed. This is handled by the test of line \ref{ln:test_start}, which is only true for $D \in I_0$ and will impact our amortized bounds, as noted below.

\begin{algorithm}[t]
\small 
\caption{Trimming Oracle for \reducemax}
\label{alg:trimming_oracle_reducemax}

\SetKwProg{Function}{Function}{:}{}
\SetKwFunction{Query}{Query}
\SetKwFunction{Update}{Update}
\SetKwFunction{Build}{Build}

\SetKwFunction{Upper}{Upper}
\SetKwFunction{Insert}{Insert}
\SetKwFunction{Delete}{Delete}

\Function{\Build{}}
{
    $\delta \gets 0$\;
    $T_1, T_2 \gets $ Pointers to two empty upper envelope data structures\;
    $h_1, \dots, h_n \gets$ Sort the growth rates of the $n$ bamboo in nonincreasing order\;
}

\Function{\Update{$U, i, c$}}
{
    \Delete{$U, \ell_i$}\;
    \Insert{$U, \ell_i(d) = h_i d + c$}\;
}

\BlankLine

\Function{\Query{}}
{
    \tcp{Ensure that the line $\ell_{\delta+1}$ corresponding to bamboo $b_{\delta+1}$ is in $U_1$}
    \If{there is no line named $\ell_{\delta+1}$ in $U_1$ \label{ln:test_start}}
    {
        \Insert{$U_1, \ell_{\delta+1}(d) = h_{\delta+1} d + h_{\delta+1}$}\label{ln:test_end}\;
    }
    \BlankLine
    
    \tcp{Cut highest bamboo $b_i$ at day $\delta$}
    $\ell_i(d) = h_id + c_i \gets$ \Upper{$\delta$}\;
    \Update{$U_1, i,  -\delta h_i$}\;
    \Update{$U_2, i, (n-\delta) h_i$}\;
    
    \BlankLine
    \tcp{Ensure that the line $\ell_{\delta+1}$ corresponding to bamboo $b_{\delta+1}$ is updated in $U_2$}
    Let $\ell_{\delta +1}(d) = h_{\delta+1}d+c_{\delta+1}$ be the line named $\ell_{\delta+1}$ in $U_1$\;
    \Update{$U_2, \delta+1, nh_{\delta+1}+c_{\delta+1}$}\;
    
    \BlankLine
    \tcp{Move to the next day and possibly to the next interval}
    $\delta \gets (\delta + 1) \bmod n$\;
    \lIf{$\delta=0$}{Swap $U_1$ and $U_2$}
    
    \BlankLine
    \Return ``Trim bamboo $b_i$''\;
}
\end{algorithm}

The performances of our trimming oracle depend on the specific implementation of the upper envelope data structure use. In \cite{DBLP:journals/jcss/OvermarsL81}, such a data structure guaranteeing a worst-case time of $O(\log^2 n)$ per operation is given, while a better amortized bound of $O(\log n)$ per operation was obtained in \cite{DBLP:conf/focs/BrodalJ02}.\footnote{Actually, the authors of \cite{DBLP:journals/jcss/OvermarsL81} and \cite{DBLP:conf/focs/BrodalJ02} design a dynamic data structure to maintain the convex hull of a set
of points in the plane. As explained in \cite{DBLP:conf/focs/BrodalJ02}, point-line duality can be used to convert such
a structure into one maintaining the upper envelope of a set of linear functions.} 
Moreover, from Theorem~\ref{thm:reduce_max} we know that the makespan of \reducemax{} is at most constant, implying that the maximum absolute value of a generic coefficient $c_i$ is at most $O(n h_i) = O(n)$.

The following theorem summarizes the time complexity of our trimming oracle.\footnote{Due to lines \ref{ln:test_start} and \ref{ln:test_end},
the complexity of a query operation is only amortized over the running time of previous queries.}

\begin{theorem}
There is a Trimming Oracle implementing \reducemax that uses $O(n)$ space, can be built in $O(n \log n)$ time, and can report the next bamboo to trim in $O(\log^2 n)$ worst-case time, or $O(\log n)$ amortized time.
\end{theorem}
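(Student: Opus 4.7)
My plan is to prove correctness of Algorithm~\ref{alg:trimming_oracle_reducemax} via an inductive invariant linking the state of $U_1, U_2$ to the current day, and then to read the three performance bounds off the cost model of the underlying upper-envelope data structure. The invariant I would maintain, just before a call to Query() that handles day $D = nj + \delta \in I_j$, is: (i) $U_1$ contains a line $\ell_i(d) = h_i d + c_i$ for each bamboo $b_i$ currently ``known'' to the oracle, namely every $i \in \{1,\dots,\min(D+1,n)\}$, and $\ell_i(\delta)$ equals the true height of $b_i$ at the start of day $D$; and (ii) $U_2$ contains a line $\ell'_i$ for every $i \in \{1,\dots,\delta\}$, with $\ell'_i(\delta')$ equal to the height that $b_i$ will reach on day $n(j+1)+\delta'$ if no further cut of $b_i$ happens in between. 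The invariant is trivially true after Build() (both structures empty, $\delta=0$).

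The inductive step is routine except during the first interval $I_0$, where $U_1$ is populated only lazily via lines \ref{ln:test_start}--\ref{ln:test_end}. There, I would argue that on day $\delta$ of $I_0$ the bamboos $b_{\delta+2},\dots,b_n$ have never been cut and each has height $h_i(\delta+1) \le h_{\delta+1}(\delta+1)$, so the tallest of them is no taller than $b_{\delta+1}$; hence the Upper query on $U_1$ returns a correct \reducemax choice under the paper's arbitrary tie-breaking rule. After $I_0$ the conditional never fires again. The two Update calls are then standard bookkeeping: setting $c_i$ to $-\delta h_i$ in $U_1$ forces $\ell_i(\delta)=0$ and models the cut; the symmetric update in $U_2$ gives $\ell'_i(\delta')=(\delta'+n-\delta)h_i$, exactly the regrowth from day $nj+\delta$ to day $n(j+1)+\delta'$; and the final copy of $\ell_{\delta+1}$ into $U_2$ extends the invariant to cover one more bamboo, so that $U_2$ is complete by the time $\delta$ wraps around and the swap is performed.

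Given correctness, the bounds follow easily. For space, $|U_1|,|U_2| \le n$ at all times, and by Theorem~\ref{thm:reduce_max} the makespan is at most $9$, so every coefficient $c_i$ satisfies $|c_i| = O(n h_i) = O(n)$ and fits in $O(1)$ machine words in the standard word-RAM model. The build time is dominated by sorting the growth rates, costing $O(n\log n)$. For query time, each call to Query() performs a constant number of Insert, Delete, and Upper operations on $U_1,U_2$; plugging in the worst-case $O(\log^2 n)$ dynamic upper-envelope structure of \cite{DBLP:journals/jcss/OvermarsL81}, or the $O(\log n)$-amortized one of \cite{DBLP:conf/focs/BrodalJ02}, then yields the two claimed bounds respectively. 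The main obstacle is really just pinning down the $I_0$ invariant above; once that is in place, the rest reduces to a direct constant-operation accounting.
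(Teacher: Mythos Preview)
Your proposal is correct and follows the same approach as the paper: you formalize exactly the invariants the paper describes informally (the role of $U_1$ and $U_2$ across intervals, the lazy population during $I_0$ justified by $h_1 \ge \dots \ge h_n$, the coefficient bound via Theorem~\ref{thm:reduce_max}), and then read off the time and space bounds from the dynamic upper-envelope structures of \cite{DBLP:journals/jcss/OvermarsL81} and \cite{DBLP:conf/focs/BrodalJ02}. The only cosmetic slip is that after a swap the new $U_2$ is not empty but holds stale lines from the previous interval, so invariant (ii) should say that the lines for $i\le\delta$ are correct while the others may be stale; this is harmless because \texttt{Update} always deletes before inserting.
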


\subsection{A Trimming Oracle achieving makespan 2}

We now design a Trimming Oracle implementing a perpetual schedule that achieves a makespan of at most $2$.

We start by rounding the rates $h_1, \dots, h_n$ down to the previous power of $\frac{1}{2}$ as in \cite{DBLP:conf/sofsem/GasieniecKLLMR17}, i.e., we set $h'_i = 2^{\lfloor \log_2 h_i \rfloor}$.
We will then provide a perpetual schedule for the rounded instance achieving makespan at most $1$ w.r.t.\ the new rates $h'_1, \dots, h'_2$. Since $h_i \le 2 h'_i$, this immediately results in a schedule having makespan at most $2$ in the original instance.

Henceforth we assume the input instance is already such that each $h_i$ is a power of $\frac{1}{2}$. Moreover, we will also assume that $\sum_{i=1}^n h_i = 1$. Indeed, if $\sum_{i=1}^n h_i <1$ then we can artificially increase some of the growth rates to meet this condition. Clearly, any schedule achieving makespan of most $1$ for the  transformed instance, also achieves makespan at most $1$ in the non-transformed instance.

We transform the instance as follows: we iteratively consider the bamboos in nonincreasing order of rates; when $b_i$ is considered we update $h_i$ to $2^{\left\lfloor \log_2\left( 1-\sum_{j \neq i} h_j\right)\right\rfloor}$, i.e., to the highest rate that is a power of $\frac{1}{2}$ and still ensures that the sum of the growth rates is at most $1$. One can easily check that the above procedure yields an instance for which $\sum_{i=1}^n h_i = 1$, as otherwise $\sum_{i=1}^n h_i < 1$ and $1 - \sum_{i=1}^{n} h_i \ge h_n$, which is a contradiction since $h_n$ would have been increased to $2h_n$. This requires $O(n \log n)$ time.

In the rest of this section, we will design  Trimming Oracles achieving a makespan of at most $1$ for instances where all $h_i$s are powers of $\frac{1}{2}$ and $\sum_{i=1}^n h_i = 1$. 

\subsubsection*{A Trimming Oracle for regular instances}

Let us start by considering an even smaller subset of the former instances, namely the ones in which $b_i$ has a growth rate of $h_i = 2^{-i}$, for $i=1,\dots,n-1$, and $h_n = h_{n-1} = 2^{-n+1}$. For the sake of brevity we say that these instances are \emph{regular}.\footnote{Notice that, in any regular instance, the grow rates of the bamboos are completely specified by the number $n$.}

It turns out that a schedule for regular instances can be easily obtained by exploiting a connection between the index $i$ of bamboo $b_i$ to be cut at a generic day $D$ and the position of the least significant $0$ in the last $n-1$ bits in the binary representation of $D$.

The schedule is as follows: if the last $0$ in the binary representation of $D$ appears in the $i$-th least significant bit, with $i < n$, then $b_i$ is to be cut at the end of day $D$. Otherwise, if the $n-1$ least significant bits of $D$ are all $1$, bamboo $b_n$ is cut at day $D$.

\begin{figure}[t]
    \centering
    \includegraphics{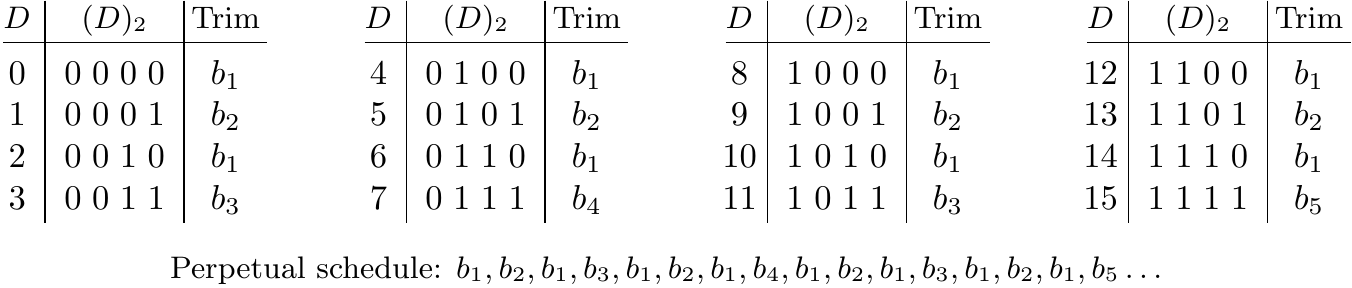}
    \caption{A perpetual schedule of a regular instance with $5$ bamboos.}
    \label{fig:schedule_powers_of_two}
\end{figure}

In this way, the maximum number of days that elapses between any two consecutive cuts of bamboo $b_i$ with $i<n$ is $M_i = 2^i$, while $b_n$ is cut every $M_n = 2^{n-1}$ days. It is then easy to see that, for each bamboo $b_i$, $h_i \cdot M_i = 1$, thus showing that the resulting makespan is $1$ as desired (and this is tight since, in any schedule with bounded makespan, $b_1$ grows for at least $2$ consecutive days). See Figure~\ref{fig:schedule_powers_of_two} for an example with $n=5$.

This above relation immediately suggests the implementation of a Trimming Oracle that maintains the binary representation of $D \bmod 2^{n-1}$. Since it is well known that a binary counter with $n$ bits can be incremented in $O(1)$ amortized time \cite[pp.~454--455]{DBLP:books/daglib/0023376}, we can state the following:
\begin{lemma}
\label{lemma:oracle_regular}
For the special case regular instances, there is a Trimming Oracle that uses $O(n)$ space, can be built in $O(n)$ time, can be queried to report the next bamboo to cut in $O(1)$ amortized time, and achieves makespan $1$.
\end{lemma}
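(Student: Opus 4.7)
The plan is to formalize the bit-based schedule described informally above the lemma and verify the four claims. Correctness reduces to a bookkeeping observation: for $i<n$, bamboo $b_i$ is trimmed exactly at those days $D$ with $D \bmod 2^i = 2^{i-1}-1$ (these are exactly the integers whose least significant $0$-bit sits at position $i$), so consecutive cuts of $b_i$ are separated by $2^i$ days; analogously $b_n$ is trimmed every $2^{n-1}$ days. Combined with $h_i = 2^{-i}$ for $i<n$ and $h_n = 2^{-n+1}$, each bamboo reaches height at most $1$ both between consecutive cuts and before its first cut (which happens within $2^{i-1}$ days), giving makespan exactly $1$.

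For the implementation I store $D \bmod 2^{n-1}$ as a binary counter $C$ of $n-1$ bits. A query at day $D$ is fused with the counter's increment: scan $C$ from its least significant bit upward, flipping each encountered $1$-bit to $0$, until reaching a $0$-bit at some position $i \le n-1$, which is then flipped to $1$; output ``trim $b_i$''. If the scan exhausts all $n-1$ bits (they were all $1$), reset $C$ to $0$ and output ``trim $b_n$''. The key point is that the position of the least significant $0$ of $C$ is precisely where a standard binary increment stops, so a single pass over the low bits both identifies the bamboo to cut and advances the counter.

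The amortized $O(1)$ query cost then follows from the textbook potential argument for binary counters, with $\Phi$ equal to the number of $1$-bits in $C$: a step that flips $k$ trailing $1$-bits and sets one $0$-bit has real cost $k+1$ and potential change $-k+1$, hence amortized cost $2$; the overflow step costs $O(n)$ but drops $\Phi$ from $n-1$ to $0$, so it too is absorbed. Building the structure only requires allocating the $n-1$ counter bits and storing $h_1,\dots,h_n$, which is $O(n)$ time and $O(n)$ space. The only point requiring care is more a boundary check than a real obstacle: one must fix the convention that a query outputs the bamboo to cut at day $D$ and only then advances to day $D+1$, and verify that under the paper's convention of indexing days from $0$ the initial state $C=0$ correctly causes $b_1$ to be output on day $0$.
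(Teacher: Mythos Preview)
Your proposal is correct and follows essentially the same approach as the paper: both use the least-significant-$0$ rule to pick the bamboo, maintain $D \bmod 2^{n-1}$ as an $(n-1)$-bit binary counter, and invoke the standard amortized increment argument. Your write-up simply spells out a few details (the residue characterization $D \bmod 2^i = 2^{i-1}-1$, fusing the query with the increment scan, and the potential-function bookkeeping) that the paper leaves implicit or delegates to the CLRS citation.
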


\subsubsection*{A Trimming Oracle for non-regular instances}

Here we show how to design a Trimming Oracle for non-regular instances by iteratively transforming them into suitable regular instances. We will refer to the bamboos $b_1, \dots, b_n$ as \emph{real bamboos} and will introduce the notion of \emph{virtual bamboos}.

A virtual bamboo $v$ represents a collection of (either real or virtual) bamboos whose growth rates yield a regular instance when suitably scaled by a common factor. The growth rate of $v$ will be equal to the sum of the growth rates of the bamboos in its collection.

To see why this concept is useful, consider an example instance $I$ with $6$ bamboos $b_1, \dots, b_6$ with rates $h_1 = \frac{1}{2}$, $h_2 = \frac{1}{8}$, $h_3 = \frac{1}{8}$, $h_4 = \frac{1}{8}$, $h_5 = \frac{1}{16}$, $h_6 = \frac{1}{16}$.
If we replace $h_4$, $h_5$, and $h_6$ with a virtual bamboo $v$ with growth rate $\mathfrak{h}=\frac{1}{8}+\frac{1}{16}+\frac{1}{16} = \frac{1}{4}$ we obtain the related regular instance $I'$ in which the bamboos $b_1$, $v$, $b_2$, $b_3$ have growth rates $\frac{1}{2}$, $\frac{1}{4}$, $\frac{1}{8}$, and $\frac{1}{8}$, respectively.
Notice also that the collection of bamboos associated with $v$ is a regular instance $I_v$ once all the rates are multiplied by $\frac{1}{\mathfrak{h}} = 4$. 
We can now build two Trimming Oracles $\mathcal{O}'$ and $\mathcal{O}_v$ for $I'$ and $I_v$, respectively, by using Lemma~\ref{lemma:oracle_regular}. 
It turns out that $\mathcal{O}'$ and $\mathcal{O}_v$ together allow us to build an oracle $\mathcal{O}_r$ for $I$ as well, which can be represented as a tree (See Figure~\ref{fig:virtual_tree}).
In general, our oracles $\mathcal{O}$ will consist of a tree $T_{\mathcal{O}}$ whose leaves are the real bamboos $b_1, \dots, b_n$ of the input instance and in which each internal vertex $u$ serves two purposes: (i) it represents a virtual bamboo whose associated collection $C$ contains the bamboos associated to the children of $u$; and (ii) it serves as a Trimming Oracle $\mathcal{O}_u$ over the bamboos in $C$.\footnote{The root of $T_{\mathcal{O}}$ can be seen as a virtual bamboo with a growth rate of $1$.}
In order to query $\mathcal{O}$ we proceed as follows: initially we start with a pointer $p$ to the root $r$ of $T_{\mathcal{O}}$; then, we iteratively check whether $p$ points to a leaf $\ell$ or to an internal vertex $u$. In the former case, we trim the real bamboo associated with $\ell$, otherwise we query the Trimming Oracle $\mathcal{O}_u$ associated with $u$ and we move $p$ to the child of $u$ corresponding to the (virtual or real) bamboo returned by the query on $\mathcal{O}_u$. Since all queries on internal vertices can be performed in $O(1)$ amortized time (see Lemma~\ref{lemma:oracle_regular}), the amortized time required to query $\mathcal{O}$ is proportional to the height of $T_{\mathcal{O}}$. See Figure~\ref{fig:virtual_tree} for the schedule associated to our example instance $I$.

\begin{figure}[t]
    \centering
    \includegraphics[width=\textwidth]{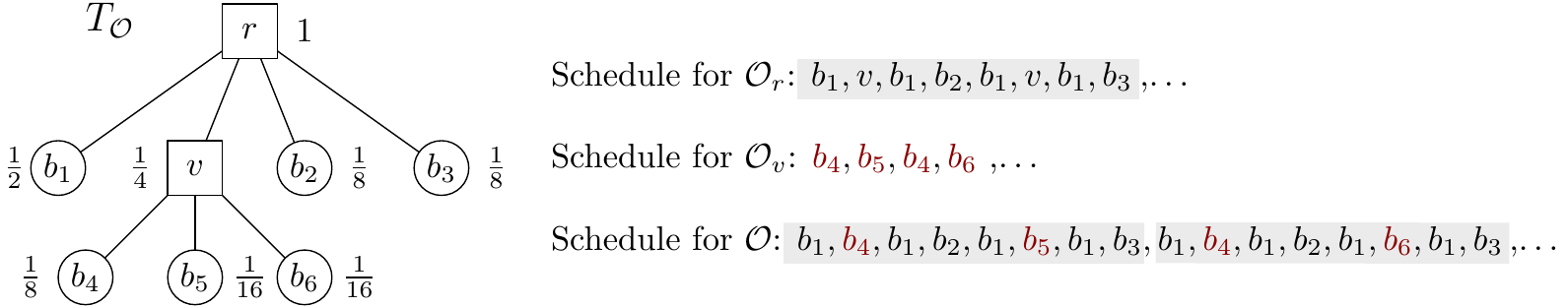}
    \caption{The tree $T_{\mathcal{O}}$ of the Trimming Oracle $\mathcal{O}$ for the instance with $6$ bamboos $b_1,\dots, b_6$ with rates $h_1 = \frac{1}{2}$, $h_2 = \frac{1}{8}$, $h_3 = \frac{1}{8}$, $h_4 = \frac{1}{8}$, $h_5 = \frac{1}{16}$, and $h_6 = \frac{1}{16}$. Bamboos $b_4$, $b_5$, and $b_6$ have been replaced by a virtual bamboo $v$ (and a corresponding oracle $\mathcal{O}_v$) with a virtual growth rate of $\frac{1}{4}$. The root $r$ represents both a virtual bamboo with growing rate $1$ and the corresponding Trimming Oracle $\mathcal{O}_r$ for the regular instance consisting of $b_1$, $v$, $b_2$, and $b_3$.}
    \label{fig:virtual_tree}
\end{figure}

Before showing how to build the tree $T_{\mathcal{O}}$ of our Trimming Oracle $\mathcal{O}$, we prove that the perpetual schedule obtained by querying $\mathcal{O}$ achieves a makespan of at most $1$. At any point in time, we say that the \emph{virtual height} of a virtual bamboo $v$ representing a collection $C$ of (real or virtual) bamboos is the maximum over the (real or virtual) heights of the bamboos in $C$. The bound on the makespan follows by instantiating the following Lemma with $b=r$ and $h=1$, and by noticing that: (i) the root $r$ of $T_{\mathcal{O}}$ is scheduled every day, and (ii) that the maximum virtual height of $r$ is exactly the makespan.

\begin{lemma}\label{lm:oracle_2_apx_correctness}
Let $b$ be a (real or virtual) bamboo with growth rate $h$. If $b$ is scheduled at least once every $\frac{1}{h}$ days, then the maximum (real or virtual) height ever reached by $b$ will be at most $1$.
\end{lemma}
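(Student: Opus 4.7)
The plan is to prove the statement by structural induction on the subtree of $T_{\mathcal{O}}$ rooted at $b$.

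\textbf{Base case:} If $b$ is a real bamboo (a leaf), then ``being scheduled'' coincides with being trimmed. Between two consecutive trims of $b$ at most $\frac{1}{h}$ days elapse, and since $b$ grows at rate $h$, its height never exceeds $h \cdot \frac{1}{h} = 1$. The virtual height of $b$ is not defined in this case, so the claim follows.

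\textbf{Inductive step:} Suppose $b$ is a virtual bamboo with growth rate $h$ whose collection $C$ consists of (real or virtual) bamboos $c_1, \dots, c_k$ with respective growth rates $h_{c_1}, \dots, h_{c_k}$, where by construction $\sum_j h_{c_j} = h$. By definition of a virtual bamboo, the scaled rates $\tilde h_{c_j} := h_{c_j}/h$ form a regular instance (they sum to $1$ and are powers of $\tfrac{1}{2}$ in the correct arrangement), and the oracle $\mathcal{O}_b$ stored at $b$ runs the regular schedule of Lemma~\ref{lemma:oracle_regular} on this scaled instance. The key step is translating guarantees about this internal schedule into guarantees in real time: each ``tick'' of $\mathcal{O}_b$ occurs only when $b$ itself is scheduled by its parent, hence at most $\frac{1}{h}$ real days pass between two consecutive ticks. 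By Lemma~\ref{lemma:oracle_regular}, in the regular schedule for the scaled instance each child $c_j$ is cut at least once every $1/\tilde h_{c_j}$ ticks of $\mathcal{O}_b$. Combining the two bounds, child $c_j$ is scheduled at least once every
\[
\frac{1}{\tilde h_{c_j}} \cdot \frac{1}{h} = \frac{h}{h_{c_j}} \cdot \frac{1}{h} = \frac{1}{h_{c_j}}
\]
real days. Applying the inductive hypothesis to each $c_j$, the maximum (real or virtual) height ever reached by $c_j$ is at most $1$. Since the virtual height of $b$ at any moment is the maximum of the (real or virtual) heights of $c_1, \dots, c_k$, it too never exceeds $1$, completing the inductive step.

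The main (only) subtlety is correctly matching the two notions of ``time'' at the two levels of the recursion: the internal oracle $\mathcal{O}_b$ reasons in units of its own ticks, whereas the inductive hypothesis for its children is phrased in real days, and one must verify that the factor $1/h$ introduced by the frequency at which $b$ is scheduled cancels exactly against the scaling factor $h$ built into the rates of the regular instance seen by $\mathcal{O}_b$. Everything else is a routine unrolling of the construction.
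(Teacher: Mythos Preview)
Your proof is correct and follows essentially the same approach as the paper: structural induction on the subtree of $T_{\mathcal{O}}$ rooted at $b$, with the base case handling real bamboos directly and the inductive step converting the tick-level guarantee of Lemma~\ref{lemma:oracle_regular} into a real-day guarantee via the factor $1/h$, then invoking the inductive hypothesis on each child. The paper's argument is identical in structure and in the key computation $d_i \le d'_i \cdot \tfrac{1}{h} \le \tfrac{1}{h\,h''_i} = \tfrac{1}{h'_i}$; you have simply made the ``two notions of time'' bookkeeping more explicit.
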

\begin{proof}
The proof is by induction on the number $\eta$ of nodes in the subtree rooted at the vertex representing $b$ in $T_\mathcal{O}$.

If $\eta=1$ then $b$ is a real bamboo and the claim is trivially true since the maximum height reached by $b$ can be at most $h \cdot \frac{1}{h} = 1$.

Suppose then that $\eta \ge 2$ and that the claim holds up to $\eta-1$. Bamboo $b$ must be a virtual bamboo representing some set $C = \{b'_1, b'_2, \dots, b'_k\}$ of (real or virtual)  bamboos which appear as children of $b$ in $T_\mathcal{O}$ and are such that: (i) for $i=1, \dots, k-1$, $b'_i$ has a growth rate of $h'_i = h/2^i$, and (ii) $b'_k$ has a growth rate of $h'_k = h/2^{k-1}$.

Virtual bamboo $b$ schedules the bamboos in $C$ by using the oracle $\mathcal{O}_v$ of Lemma~\ref{lemma:oracle_regular}, on the regular instance obtained by changing the rate of bamboo $b'_i$ from $h'_i$ to $h''_i = h'_i / h$.

Let $d_i$ (resp. $d'_i$) be the maximum number of days between any two consecutive cuts of bamboo $b'_i$ according to to the schedule produced by $\mathcal{O}$ (resp. $\mathcal{O}_v$).
We know that $d'_i \cdot h''_i \le 1$ (as otherwise the schedule of $\mathcal{O}_v$ would result in makespan larger than $1$ on a regular instance, contradicting Lemma~\ref{lemma:oracle_regular}), i.e., $d'_i \le \frac{1}{h''_i}$.
Since, $b$ is scheduled at least every $1/h$ days by hypothesis, we have that $d_i \le \frac{1}{h \cdot h''_i} = \frac{h}{h \cdot h'_i} = \frac{1}{h'_i}$ and hence, by inductive hypothesis, the maximum height reached by $b'_i$ will be at most $1$.
\end{proof}

We now describe an algorithm that constructs a tree $T_\mathcal{O}$ of logarithmic height.

The algorithm employs a collection of sets $S_0, S_1, \dots$, where initially $S_0 = \{ b_1, \dots, b_n \}$ contains all the real bamboos of our input instance, and $S_{i}$ with $i>0$ is obtained from $S_{i-1}$ by performing suitable \merge operations over the bamboos in $S_{i-1}$. 
A \merge operation on a collection $C \subseteq S_{i-1}$ of bamboos, whose growth rates yield a regular instance when multiplied by some common factor, consists of: updating $S_{i-1}$ to $S_{i-1} \setminus C$, \emph{creating} a new virtual bamboo $v$ representing $C$, and adding $v$ to $S_i$.

The algorithm works in phases. At the generic phase $i=1,2,\dots$, it iteratively: (1) looks for a bamboo $b$ with the largest growth rate that can be involved in a \merge operation and (2) perform a \merge operation on a maximal set $C \subseteq S_{i-1}$ among the ones that contain $b$ (and on which a \merge operation can be performed). The procedure is then repeated from step (1) until no suitable bamboo $b$ exists anymore. At this point we name $R_{i-1}$ the current set $S_{i-1}$, we add to $S_i$ all the bamboos in $R_{i-1}$, and we proceed to the next phase. The algorithm terminates whenever the set $S_i$ constructed at the end of a phase contains a single virtual bamboo $r$ (of rate $1$).

The sequence of \merge operations implicitly defines a bottom-up construction of the tree $T_\mathcal{O}$, where every \merge operation creates a new internal vertex associated with its corresponding virtual bamboo. The root of $T_\mathcal{O}$ is $r$ and the height of $T_\mathcal{O}$ coincides with the number of phases of the algorithm. 

\begin{lemma}
\label{lemma:alg_log_n_phases}
The algorithm terminates after at most $O(\log n)$ phases.
\end{lemma}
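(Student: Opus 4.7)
Since each phase adds exactly one new layer of internal nodes to $T_\mathcal{O}$, the number of phases equals the height $\phi$ of $T_\mathcal{O}$, so my plan is to prove $\phi \le \log_2 n$ by showing, via induction on $h(v)$, that the subtree rooted at any node $v \in T_\mathcal{O}$ contains at least $2^{h(v)}$ leaves, where $h(v)$ denotes the height of that subtree. Applied at the root this gives $n \ge 2^{\phi}$, hence $\phi = O(\log n)$. The inductive step is immediate provided that every internal node $v$ has at least two children whose height equals $h(v)-1$; the whole proof therefore reduces to establishing this balancedness property for the trees the algorithm produces.

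Let $v$ be an internal node created in phase $p$, so $h(v)=p$ (since leaves correspond to ``phase $0$'' and each merge increases the height by one). The merge forming $v$ is built on a regular collection $C$ with exactly two bamboos at its smallest rate; call them $u$ and $u'$, and let $\kappa$ be their common level. Since $u,u' \in S_{p-1}$ their heights are at most $p-1 = h(v)-1$. The key claim is that both heights actually equal $p-1$, i.e., both $u$ and $u'$ are \emph{newly created} in phase $p-1$; this immediately gives the two desired children at the maximum child-height.

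To prove the claim, suppose for contradiction that one of them, say $u$, is ``old'', meaning $u \in R_{p-2}$ was carried over unmerged from phase $p-1$. Recall that at the end of phase $p-1$ no further merge is possible, so $R_{p-2}$ satisfies $n_k \le 1$ at every level; thus the other bamboo $u'$ at level $\kappa$ in $S_{p-1}$ must instead be new, arising from some merge in phase $p-1$ whose chain has top level $\kappa+1$ and some size $s \ge 2$. By definition of such a chain, at the iteration in which it is performed one has $n_{\kappa+1}\ge 1$ together with the appropriate $\ge 1$/$\ge 2$ conditions on deeper levels. But then, at the very same iteration, the chain from level $\kappa$ of size $s+1$ (obtained by prepending $u$, which contributes $n_\kappa \ge 1$) is also valid. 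Since the greedy policy always picks the bamboo at the \emph{smallest} mergeable level, and $\kappa < \kappa+1$, the algorithm would have merged $u$ instead, contradicting $u \in R_{p-2}$.

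The main technical subtlety, which I expect to be the hardest part to pin down cleanly, is to justify the preconditions on $n_\kappa,n_{\kappa+1},\ldots$ at the specific iteration of phase $p-1$ that creates $u'$, since many other merges may have occurred earlier in the same phase. The idea is that, as long as $u$ remains unmerged, any previous chain started at a level $\kappa'' < \kappa$ must end strictly below $\kappa$: otherwise it would traverse $\kappa$ and consume $u$ (a chain through a level of multiplicity $1$ takes the unique bamboo at that level). Hence every earlier merge leaves $n_\kappa$ and $n_{\kappa+1},\ldots$ untouched, so the conditions needed for the comparison between the chains from $\kappa+1$ and from $\kappa$ remain in force, closing the contradiction and completing the proof.
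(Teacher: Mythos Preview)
Your approach is essentially the same as the paper's: prove that every internal node created in phase $p$ has at least two children created in phase $p-1$, and then use this inductively to show that the subtree rooted at such a node has at least $2^{p}$ leaves. The paper derives the contradiction via maximality of the chosen merge set (step~(2) of the algorithm) rather than via the ``largest-rate'' rule (step~(1)) that you invoke, but both work.

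One simplification you should make: the ``technical subtlety'' in your final paragraph is not actually needed and the argument you sketch there is shakier than the rest. You do not have to track the level counts $n_\kappa,n_{\kappa+1},\dots$ or reason about how earlier chains in phase $p-1$ interact with level~$\kappa$. It suffices to argue with the specific bamboos: since $u\in R_{p-2}$, $u$ is never merged during phase $p-1$ and hence lies in $S_{p-2}$ at \emph{every} iteration of that phase; and $C'$ is in $S_{p-2}$ at the iteration creating $u'$ by definition. Therefore $\{u\}\cup C'$ is available and forms a regular collection at that moment, so $u$ is mergeable there, contradicting the choice of $b$ at level $\kappa+1$ (or, equivalently, contradicting the maximality of $C'$). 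This is exactly how the paper closes the case, in one line.
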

\begin{proof}
We first prove that the algorithm must eventually terminate.
This is a direct consequence of the fact that, at the beginning of any phase $i$, every set $S_{i-1}$ containing $2$ or more bamboos, admits at least one \merge operation.
Indeed, since \merge operations preserve the sum of the growth rates, the overall sum of the rates of the bamboos in $S_{i-1}$ must be $1$.
Consider now a bamboo $b \in S_{i-1}$ having the lowest growth rate $h$. Since all rates are powers of $\frac{1}{2}$ and must sum to $1$, there must be at least one other bamboo $b' \in S_{i-1} \setminus \{ b \}$ having rate $h$, implying that \merge operation can be performed on $C=\{b, b'\}$.

It remains to bound the number of phases. We prove by induction on $i$ that any internal vertex/virtual bamboo $v$ of $T_\mathcal{O}$ created at phase $i$ has at least $2^i$ leaves as descendants. 
The base case $i=1$ is trivial since the merge operation that created $v$ must have involved at least $2$ real bamboos.

Consider now the case $i \ge 2$. We will show that $v$ was created by a \merge operation on a collection $C$ containing at least $2$ bamboos $v', v''$ that were, in turn, created during phase $i-1$. Hence, by inductive hypothesis, the number of leaves that are descendants of $v$ is the sum of the number of leaves that are descendants of $v'$ and $v''$, respectively, i.e., it is at least $2^{i-1} + 2^{i-1} = 2^i$.

Let $C \subseteq S_{i-1}$ be the set of bamboos used in the \merge operation that created $v$, and let $h$ be the smallest growth rate among the ones of the bamboos in $C$.
Notice that, by definition of \merge operation, there must be $2$ distinct bamboos $v',v''$ with rate $h$ in $C$. We will now show that $v'$ and $v''$ were created during phase $i-1$.
We proceed by contradiction.
If neither of $v'$ and $v''$ were created in phase of $i-1$, then $\{v', v''\} \subseteq R_{i-2}$ which is impossible since $\{v', v''\}$ would have been a feasible merge operation in phase $i-2$.
Assume then that  $v'$ was not created in phase $i-1$, while $v''$ was created in phase $i-1$, w.l.o.g. Then, $v' \in R_{i-2}$, while $v''$ was obtained from a merge operation on a set $C' \subseteq S_{i-2}$ performed in phase $i-1$. Since the growth rate of $v''$ is $h$, the fastest growth rate among the ones of the bamboos in $C'$ must be $h/2$. Hence, the set $C'' = \{v'\} \cup C'$ was a feasible \merge operation in phase $i-1$ when $v''$ was created. This is a contradiction since $C' \subset C''$ was not a maximal set, as required by the algorithm.
 \end{proof}
 
Next Lemma bounds the computational complexity of constructing our oracle.

\begin{lemma}\label{lm:oracle_2apx_buildingtime}
The Trimming Oracle $\mathcal{O}$ can be built in $O(n \log n)$ time. 
\end{lemma}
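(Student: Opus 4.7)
The plan is to split the construction cost into three parts: (i) the preprocessing (sorting, rounding every rate to a power of $\tfrac12$, and inflating rates so that $\sum_i h_i=1$), which has already been argued to take $O(n\log n)$ time just before the lemma statement; (ii) the execution of the \merge operations that build the tree $T_\mathcal{O}$; and (iii) the construction, for each internal vertex $u$, of the per-vertex oracle $\mathcal{O}_u$ via Lemma~\ref{lemma:oracle_regular}.

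For part (iii), each internal vertex $u$ is associated with a collection $C_u$ and its oracle $\mathcal{O}_u$ is built in $O(|C_u|)$ time by Lemma~\ref{lemma:oracle_regular}. Since every (real or virtual) bamboo other than the root belongs to exactly one $C_u$, and $T_\mathcal{O}$ has at most $2n-1$ nodes (each \merge shrinks $|S|$ by at least one), we have $\sum_u|C_u|=O(n)$, so part (iii) costs $O(n)$ overall. For part (ii), I would maintain $S_{i-1}$ with an array indexed by $\ell=-\log_2 h$ whose $\ell$-th entry stores the bamboos of rate $2^{-\ell}$, together with a sorted doubly-linked list $\Lambda$ of the currently non-empty indices. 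In each phase I repeatedly look at the smallest $\ell^\ast\in\Lambda$ and scan $\Lambda$ forward along the maximal run of consecutive active indices $\ell^\ast,\ell^\ast+1,\ldots,\ell^\ast+L-1$ starting at $\ell^\ast$, remembering the deepest index $\ell_{\mathrm{close}}$ in that run whose count is at least $2$. If $\ell_{\mathrm{close}}$ exists, I perform the \merge that consumes one bamboo per level in $[\ell^\ast,\ell_{\mathrm{close}}-1]$ and two at $\ell_{\mathrm{close}}$, which is maximal among the merges containing the starting bamboo at $\ell^\ast$; otherwise every level of the run has count exactly one, no merge in this phase can contain any bamboo of the run (each such bamboo would have to sit at the top of its merge, and the very same failure pattern would repeat from any starting point in the run), so I batch-move the whole run into $R_{i-1}$ and resume from the active index immediately after the gap. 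Either branch removes from $S_{i-1}$ every bamboo it touches, so the work of phase $i$ is $O(|S_{i-1}|)$.

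Combining the three parts with $|S_{i-1}|\le n$ and Lemma~\ref{lemma:alg_log_n_phases}, which bounds the number of phases by $O(\log n)$, yields total time $O(n\log n)+O(\log n)\cdot O(n)+O(n)=O(n\log n)$. The main obstacle, and the reason the batch-move step is essential, is precisely that a naive implementation of step~(1) would re-walk the same failing run once per bamboo and incur a quadratic blow-up per phase; detecting the failure once and discarding the entire run in a single sweep is what keeps the per-phase cost linear in $|S_{i-1}|$.
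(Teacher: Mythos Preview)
Your overall decomposition into preprocessing, per-phase merging work, and per-vertex oracle construction matches the paper's, and parts (i) and (iii) are fine. The gap is in part (ii): the claim ``either branch removes from $S_{i-1}$ every bamboo it touches'' is false for the merge branch, and as written your per-phase cost can be quadratic, not linear.

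Concretely, in the merge branch you scan the \emph{entire} run to locate the deepest $\ell_{\mathrm{close}}$, but the merge only removes one bamboo from each level in $[\ell^\ast,\ell_{\mathrm{close}}-1]$ and two from $\ell_{\mathrm{close}}$; the count-$1$ tail $\ell_{\mathrm{close}}+1,\dots,\ell^\ast+L-1$ is scanned and left intact. Consider a run where level $\ell^\ast$ has $m$ bamboos and levels $\ell^\ast+1,\dots,\ell^\ast+m$ each have one. Then $\ell_{\mathrm{close}}=\ell^\ast$ every time, each of the $\lfloor m/2\rfloor$ iterations re-scans all $m+1$ levels, and you spend $\Theta(m^2)$ work on a run containing $2m$ bamboos. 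Your final paragraph correctly identifies the quadratic-rescan danger, but the batch-move you describe only fires in the ``otherwise'' branch; the same rescanning pathology survives in the merge branch.

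The paper's fix is exactly the missing step: after performing the merge on the prefix up to $\ell'_1$ (your $\ell_{\mathrm{close}}$), it \emph{also} discards the count-$1$ tail (the set $D$) to $R_{i-1}$ in the same sweep. This is sound because those bamboos, being singletons followed by a gap, can never sit at the bottom of any merge in this phase. With this addition every level scanned contributes at least one removed bamboo, so the work of one sweep is bounded by the number of bamboos it deletes, and the phase cost is $O(|S_{i-1}|)$ as you wanted. An alternative repair is to stop at the \emph{first} level with count $\ge 2$ rather than the deepest (this still yields a maximal $C$), so you never scan the tail at all.
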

\begin{proof}
It suffices to prove that every phase $i$ of our algorithm can be implemented in $O(n)$ time, since from Lemma~\ref{lemma:alg_log_n_phases} the number of phases is $O(\log n)$. 

We maintain the set $S_{i-1}$ as a doubly linked list $L_{i-1}$ in which each node $\ell$ is associated with a distinct growth rate $\mathfrak{h}_\ell$ attained by at least one bamboo in $S_{i-1}$ and stores the set $H(\ell)$ of bamboos of $S_{i-1}$ with grow rate $\mathfrak{h}_\ell$. Nodes appear in decreasing order of $\mathfrak{h}_\ell$. The very first list $L_0$ can be constructed in $O(n \log n)$ time by sorting the growth rates of the bamboos in $S_0$. We now show how to build $L_i$ in $O(n)$ time.

The idea is to iteratively find two nodes $\ell_1, \ell_2$ of $L_{i-1}$ such that: (i) $\ell_2$ is not the head of $L_{i-1}$ and appears not earlier than $\ell_1$; (ii) if $\ell_1$ is not the head of $L_{i-1}$, then selecting one bamboo from the set $H(\ell)$ of each node $\ell$ that appears before the predecessor $\ell'_1$ of $\ell_1$, and two bamboos from the set $H(\ell'_1)$ yields the (maximal) set $C$ corresponding the \merge operation that algorithm performs; and (iii) all the bamboos in the sets $H(\ell)$ of the nodes $\ell$ that appear not earlier than $\ell_1$ and before $\ell_2$ in $L_{i-1}$ will not participate in any merge operation of phase $i$. We call the set of these bamboos $D$ (notice that it is possible for $\ell_2$ to be equal to $\ell_1$, in which case no such node $\ell$ exists and $D=\emptyset$).

To find $\ell_1$ and $\ell_2$ notice that $\ell_2$ is the the last node of $L_{i-1}$ for which any two
consecutive nodes preceding $\ell_2$ correspond to consecutive rates\footnote{For technical simplicity, when all consecutive nodes of $L_{i-1}$ correspond to consecutive rates we allow $\ell_1$ and/or $\ell_2$ to point one position past the end of $L_{i-1}$.}, while the predecessor $\ell'_1$ of $\ell_1$ is the last node that appears before $\ell_2$ and such that $|H(\ell'_1)| \ge 2$.

We now delete the bamboos in $C \cup D$ from their respective sets $H(\ell)$ of $L_{i-1}$, create a new virtual bamboo $v$ by a \merge operation on $C$. Finally, delete from $L_{i-1}$ all nodes $\ell$ whose set $H(\ell)$ is now empty. We then repeat this procedure from the beginning until $L_{i-1}$ is empty. 

Concerning the time complexity, notice that finding $\ell_1$ and $\ell_2$ requires $O(k)$ time, where $k$ is the number of nodes that precede $\ell_2$ in $L_{i-1}$. Moreover, all the other steps can be implemented in $O(k)$ time. Therefore, we are able delete $k$ bamboos from $L_{i-1}$ in $O(k)$ time, and hence the overall time complexity to delete all bamboos in $L_{i-1}$ is $O(n)$.

Finally, by keeping track of the sets $D$, of all the virtual bamboos $v$ generated during the iterations, and by using the fact that the rates of the virtual bamboos are monotonically decreasing, it is also possible to build $L_i$ in $O(n)$ time.
\end{proof}

\begin{lemma}\label{lm:oracle_2_apx_space}
The Trimming Oracle $\mathcal{O}$ uses $O(n)$ space.
\end{lemma}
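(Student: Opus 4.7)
The plan is to bound the total space by decomposing $\mathcal{O}$ into the tree structure $T_\mathcal{O}$ plus the individual oracles $\mathcal{O}_u$ stored at each internal vertex $u$, and then show that each piece contributes $O(n)$ space overall.

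First I would observe that the leaves of $T_\mathcal{O}$ are exactly the $n$ real bamboos $b_1,\dots,b_n$, while each internal vertex is a virtual bamboo created by some \merge operation. The key structural fact I need is that every internal vertex has at least $2$ children. This holds because every \merge operation is performed on a set $C$ of bamboos whose rates form a regular instance when scaled, and a regular instance by definition contains at least $2$ bamboos (since the defining condition $h_n = h_{n-1}$ requires $n \ge 2$). From this and the standard counting argument for trees in which every internal node has at least two children, the number of internal vertices is at most $n-1$, so $T_\mathcal{O}$ has $O(n)$ vertices and $O(n)$ edges in total.

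Next I would bound the space used at each internal vertex $u$. By construction, $\mathcal{O}_u$ is an oracle for the regular instance induced by the children of $u$, so by Lemma~\ref{lemma:oracle_regular} it uses space linear in the number of children of $u$. Summing over all internal vertices of $T_\mathcal{O}$ yields
\[
\sum_{u \text{ internal}} O(\deg^+(u)) \;=\; O\!\left(\sum_{u \text{ internal}} \deg^+(u)\right) \;=\; O(|E(T_\mathcal{O})|) \;=\; O(n),
\]
where $\deg^+(u)$ is the number of children of $u$. Adding the $O(n)$ space for the tree itself (pointers to children, the pointer $p$ used during queries, and the association between leaves and real bamboos) gives the claimed $O(n)$ overall bound.

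I do not expect any real obstacle here: the only subtle point is justifying that each internal vertex has at least two children, which I would handle by appealing directly to the definition of a regular instance and of a \merge operation given earlier in this subsection. Everything else reduces to a standard handshake-type accounting over the edges of $T_\mathcal{O}$.
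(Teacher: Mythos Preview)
Your proposal is correct and matches the paper's own proof essentially line for line: both argue that each internal vertex stores an oracle of size proportional to its number of children (via Lemma~\ref{lemma:oracle_regular}), sum these contributions over the tree, and bound the total by $O(n)$ using the fact that every internal vertex has at least two children so that $T_\mathcal{O}$ has $O(n)$ vertices. The only difference is that you spell out the edge-counting (handshake) step explicitly, whereas the paper states directly that the total space is proportional to the number of vertices of $T_\mathcal{O}$.
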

\begin{proof}
    By Lemma~\ref{lemma:oracle_regular} each internal vertex of $T_\mathcal{O}$ maintains a Trimming Oracle with size proportional to the number of its children, implying that the overall space required by $\mathcal{O}$ is proportional to the number $\eta$ of vertices of $T_\mathcal{O}$.
    Since every internal vertex in $T_\mathcal{O}$ has at least $2$ children, we have that $\eta = O(n)$.
\end{proof}

By combing Lemma~\ref{lm:oracle_2_apx_correctness}, Lemma~\ref{lemma:alg_log_n_phases}, Lemma~\ref{lm:oracle_2apx_buildingtime}, and Lemma~\ref{lm:oracle_2_apx_space}, we can state the following theorem that summarizes the result of this section: 

\begin{theorem}
There is a Trimming Oracle that achieves makespan $2$, uses $O(n)$ space, can be built in $O(n \log n)$ time, and can report the next bamboo to trim in $O(\log n)$ amortized time.
\end{theorem}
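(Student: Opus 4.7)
The plan is to assemble the theorem from the four lemmas already established in this subsection, together with the preprocessing step that rounds the rates to powers of $\tfrac{1}{2}$ and then inflates them to sum to $1$. First I would observe that the rounding replaces each $h_i$ by $h'_i = 2^{\lfloor \log_2 h_i\rfloor}$, so $h_i \le 2 h'_i$; thus any schedule with makespan at most $1$ on the rounded (and inflated) instance gives makespan at most $2$ on the original. The inflation step only increases rates, so a schedule valid for the inflated instance remains valid (with no larger makespan) on the rounded instance. Both preprocessing steps clearly run in $O(n\log n)$ time, dominated by the initial sort.

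Next I would apply Lemma~\ref{lm:oracle_2_apx_correctness} to the root $r$ of $T_\mathcal{O}$, which represents a virtual bamboo of rate $h=1$ scheduled every day (i.e., at least once every $\tfrac{1}{h}$ days). The lemma then yields that the maximum virtual height of $r$, which coincides with the makespan of the induced schedule on the (rounded, inflated) instance, is at most $1$. Combined with the rounding argument above this gives makespan at most $2$ on the original instance, establishing correctness.

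For the resource bounds I would combine the remaining lemmas. Lemma~\ref{lm:oracle_2apx_buildingtime} gives the $O(n\log n)$ construction time of $T_\mathcal{O}$, which absorbs the preprocessing cost. Lemma~\ref{lm:oracle_2_apx_space} gives $O(n)$ total space. For query time, a single call to the oracle descends the pointer $p$ from the root of $T_\mathcal{O}$ to a leaf, invoking at each internal vertex $u$ one query on the regular-instance oracle $\mathcal{O}_u$ provided by Lemma~\ref{lemma:oracle_regular}. By Lemma~\ref{lemma:alg_log_n_phases} the height of $T_\mathcal{O}$ is $O(\log n)$, and each $\mathcal{O}_u$ answers in $O(1)$ amortized time, so the cost of a query is $O(\log n)$ amortized.

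The step that requires a little care is the composition of amortized bounds along a root-to-leaf path. I would handle it by a standard layered charging argument: fix a level $k$ of $T_\mathcal{O}$; in any sequence of $Q$ oracle calls, at most $Q$ queries are issued against oracles at level $k$ (one per call), and since each such oracle has an amortized cost of $O(1)$, the total work at level $k$ over the $Q$ calls is $O(Q)$. Summing over the $O(\log n)$ levels gives total work $O(Q\log n)$, i.e., $O(\log n)$ amortized per oracle call, as claimed.
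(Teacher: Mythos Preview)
Your proposal is correct and follows essentially the same route as the paper, which simply states that the theorem is obtained by combining Lemma~\ref{lm:oracle_2_apx_correctness}, Lemma~\ref{lemma:alg_log_n_phases}, Lemma~\ref{lm:oracle_2apx_buildingtime}, and Lemma~\ref{lm:oracle_2_apx_space}. Your layered charging argument for composing the per-node $O(1)$ amortized bounds along a root-to-leaf path is a detail the paper leaves implicit; it is correct and a welcome addition.
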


\subsection*{Acknowledgements}
The authors would like to thank Francesca Marmigi for the picture of the robotic panda gardener in Figure~\ref{fig:example}. We are also grateful to an anonymous reviewer whose comments allowed us to significantly simplify the analysis of \reducemax.

\bibliographystyle{plainurl}
\bibliography{bibliography}

\end{document}